\documentclass{article} 
\usepackage{iclr2023_conference,times}


\usepackage{amsmath,amsfonts,bm}
\usepackage{amsthm}

\newtheorem{theorem}{Theorem}
\newtheorem{definition}{Definition}
\newtheorem{lemma}{Lemma}

\newcommand{\widesim}[2][1.5]{
  \mathrel{\overset{#2}{\scalebox{#1}[1]{$\sim$}}}
}



\def\Figref#1{Figure~\ref{#1}}

\def\Defref#1{Definition~\ref{#1}}

\def\Secref#1{Section~\ref{#1}}


\def\eqref#1{equation~\ref{#1}}
\def\Eqref#1{Equation~\ref{#1}}








\def\1{\bm{1}}










\def\mB{{\bm{B}}}

\def\mF{{\bm{F}}}

\def\mX{{\bm{X}}}
\def\mY{{\bm{Y}}}
\def\mZ{{\bm{Z}}}

\DeclareMathAlphabet{\mathsfit}{\encodingdefault}{\sfdefault}{m}{sl}
\SetMathAlphabet{\mathsfit}{bold}{\encodingdefault}{\sfdefault}{bx}{n}











\newcommand{\E}{\mathbb{E}}



\DeclareMathOperator*{\argmax}{arg\,max}

\usepackage{graphicx}
\usepackage{tikz}
\usepackage{comment}
\usepackage{amsmath,amssymb} 
\usepackage{color}
\usepackage{algorithm}
\usepackage{algorithmic}
\usepackage{adjustbox}
\usepackage{multirow}
\usepackage{booktabs}
\usepackage{tabularx}
\usepackage{arydshln}
\usepackage{subcaption}
\usepackage{bbm}
\usepackage{enumitem}
\usepackage{bm}
\usepackage[outline]{contour}
\usepackage{hyperref}
\usepackage{wrapfig}
\contourlength{0.2pt}
\contournumber{10}%

\newcommand{\JSRT}{\textit{JSRT~}}
\newcommand{\ISIC}{\textit{ISIC 2017~}}
\newcommand{\Brats}{\textit{Brats 2020~}}
\newcommand{\LIDC}{\textit{LIDC-IDRI~}}

\newsavebox\CBox
\def\textBF#1{\sbox\CBox{#1}\resizebox{\wd\CBox}{\ht\CBox}{\contour{black}{#1}}}

\newcommand{\myparagraph}[1]{\noindent\textbf{#1}}

\title{Learning to Segment from Noisy Annotations: A Spatial Correction Approach}


\iclrfinalcopy

\author{Jiachen Yao$^1$, Yikai Zhang$^2$, Songzhu Zheng$^1$, Mayank Goswami$^3$, Prateek Prasanna$^1$, Chao Chen$^1$ \\
$^1$Stony Brook University, $^2$Morgan Stanley, $^3$CUNY Queens College \\
$^1$\texttt{\{jiachen.yao,zheng.songzhu,prateek.prasanna,chao.chen.1\}@stonybrook.edu} \\
$^2$\texttt{yikai.zhang@morganstanley.com}, $^3$\texttt{mayank.isi@gmail.com}
}


%

\begin{document}

\maketitle

\begin{abstract}
Noisy labels can significantly affect the performance of deep neural networks (DNNs). In medical image segmentation tasks, annotations are error-prone due to the high demand in annotation time and in the annotators' expertise. Existing methods mostly assume noisy labels in different pixels are \textit{i.i.d}. However, segmentation label noise usually has strong spatial correlation and has prominent bias in distribution. In this paper, we propose a novel Markov model for segmentation noisy annotations that encodes both spatial correlation and bias. Further, to mitigate such label noise, we propose a label correction method to recover true label progressively. We provide theoretical guarantees of the correctness of the proposed method. Experiments show that our approach outperforms current state-of-the-art methods on both synthetic and real-world noisy annotations.
\footnote{Codes are available at \url{https://github.com/michaelofsbu/SpatialCorrection}.}
\end{abstract}

\section{Introduction}
\label{sec:Introduction}
Noisy annotations are inevitable in large scale datasets, and can heavily impair the performance of deep neural networks (DNNs) due to their strong memorization power~\citep{Generalization,Memorization}. 
Image segmentation also suffers from the label noise problem.
For medical images, segmentation quality is highly dependent on human annotators' expertise and time spent. In practice, medical students and residents in training are often recruited to annotate, potentially introducing errors~\citep{CollectMedicalImage,MedicalData}. We also note even among experts, there can be poor consensus in terms of objects' location and boundary~\citep{Brats,Variability,HumanError2020}. Furthermore, segmentation annotations require pixel/voxel-level detailed delineations of the objects of interest. Annotating objects involving complex boundaries and structures are especially time-consuming. Thus, errors can naturally be introduced when annotating at scale. 

Segmentation is the first step of most analysis pipelines. Inaccurate segmentation can introduce error into measurements such as the morphology, which can be important for downstream diagnosis and prognostic tasks~\citep{wang20193d,nafe2005morphology}.
Therefore, it is important to develop robust training methods against segmentation label noise. 
However, despite many existing methods addressing label noise in classification tasks \citep{LossCorr,Coteachingplus,GCE,dividemix,ELR, zhang2021learning,REL}, limited progress has been made in the context of image segmentation. 

A few existing segmentation label noise approaches~\citep{pickandlearn,CharacterizingLE,HumanError2020} directly apply methods in classification label noise. However, these methods assume the label noise for each pixel is i.i.d.~(independent and identically distributed).
This assumption is not realistic in the segmentation context, where annotation is often done by brushes, and error is usually introduced near the boundary of objects. 
Regions further away from the boundary are less likely to be mislabeled (see Fig.~\ref{fig:RealSegmentationNoise-c} for an illustration). 
Therefore, in segmentation tasks, label noise of pixels has to be spatially correlated. An i.i.d.~label noise will result in unrealistic annotations as in Fig.~\ref{fig:RealSegmentationNoise-b}.

\begin{wrapfigure}{R}{0.5\textwidth}
	\vspace{-.2in}
\begin{minipage}{0.5\textwidth}
  \begin{subfigure}{0.3\textwidth}
  \centering
  \includegraphics[width=\linewidth]{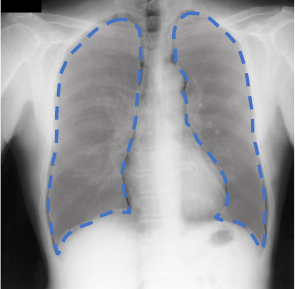}
  \caption{ }
  \label{fig:RealSegmentationNoise-a}
  \end{subfigure}
  \begin{subfigure}{0.3\textwidth}
  \centering
  \includegraphics[width=\linewidth]{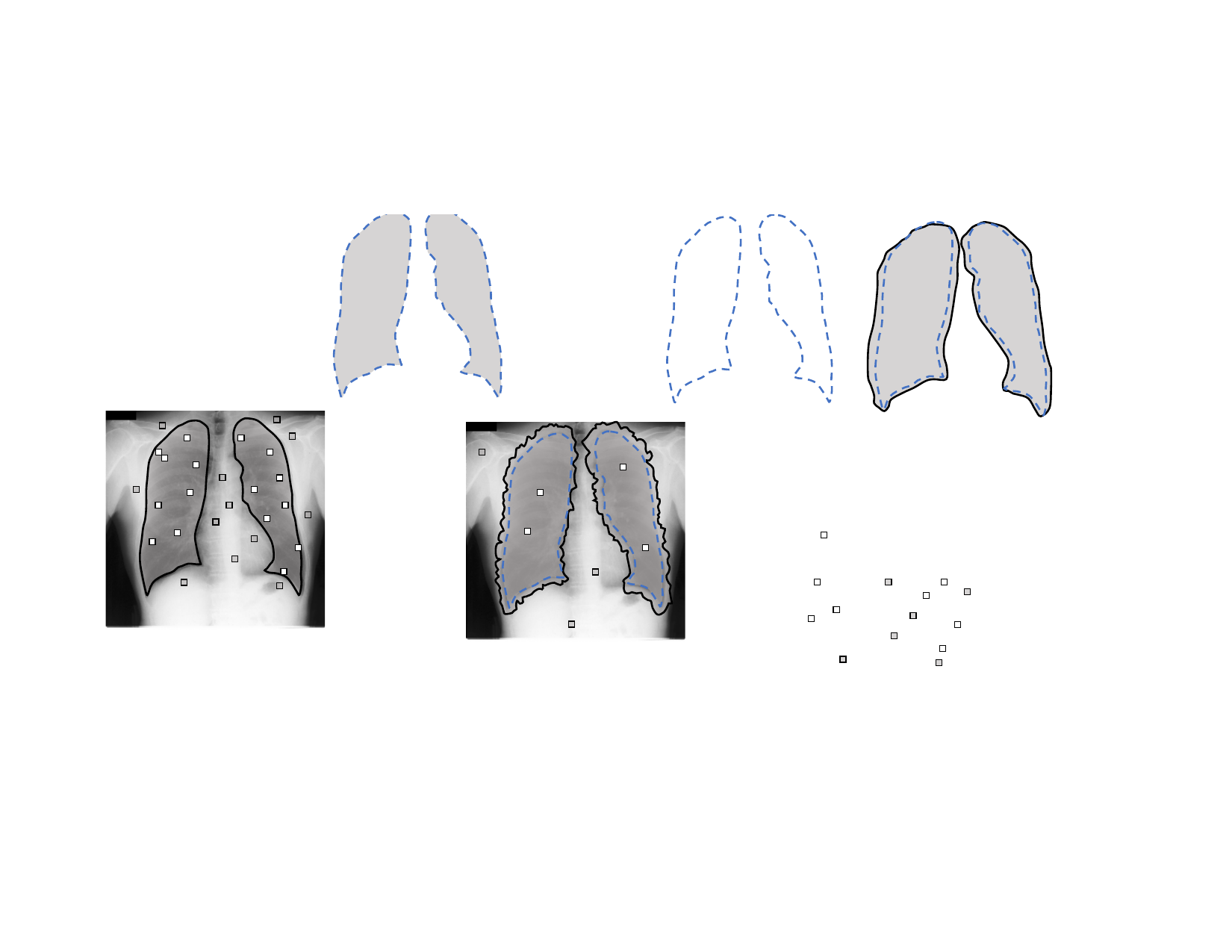}
  \caption{ }
  \label{fig:RealSegmentationNoise-b}
  \end{subfigure}
 \begin{subfigure}{0.3\textwidth}
  \centering
  \includegraphics[width=\linewidth]{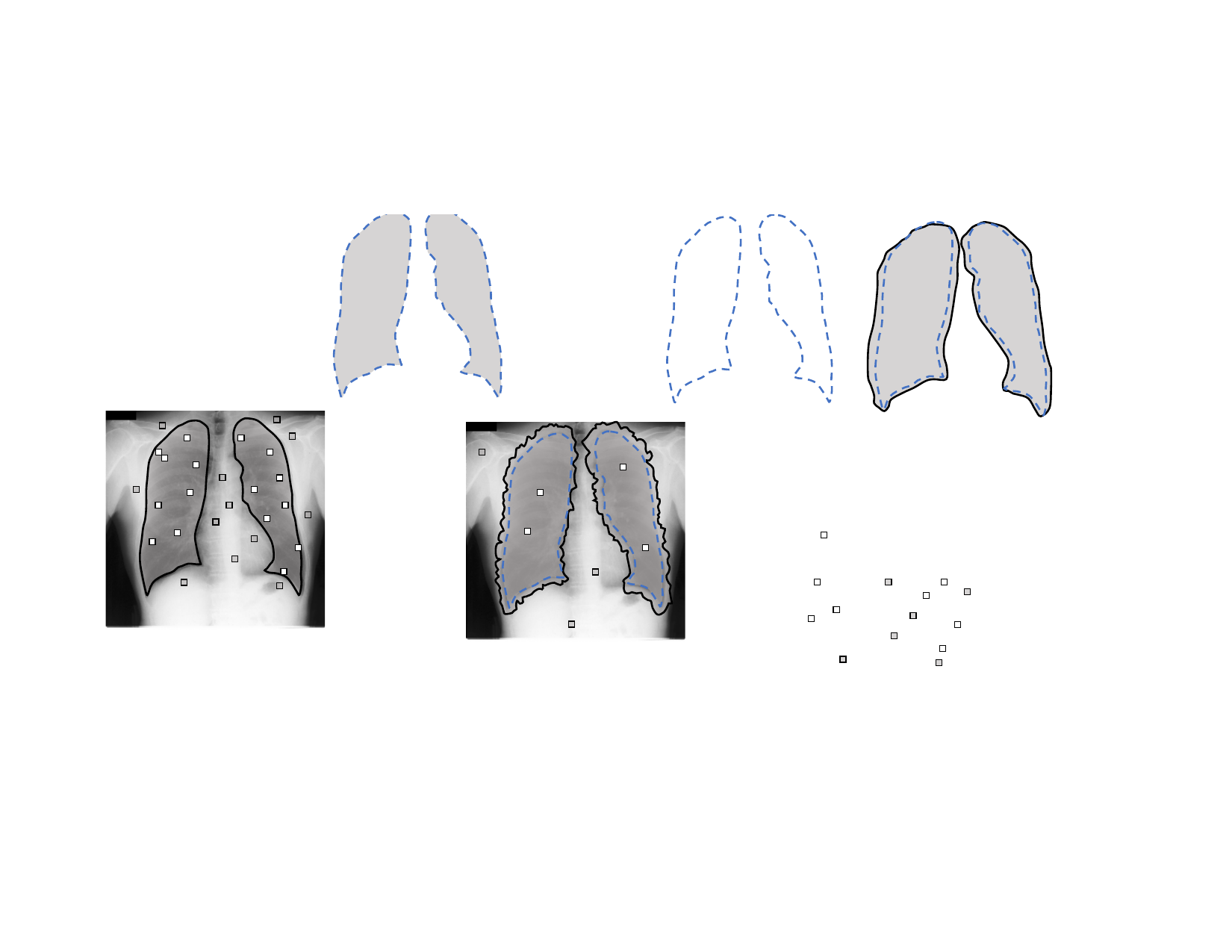}
  \caption{ }
  \label{fig:RealSegmentationNoise-c}
  \end{subfigure}
   \vspace{-.1in}
 \caption{(a) Original image with true segmentation boundary (blue dash line). (b) Classification label noise model in segmentation context is unrealistic, where the label noise (small squares) spread allover the mask. (c) A realistic segmentation noise generated by our noise model. The noise is mostly about distortions of the boundary. A few random flippings appear at the interior/exterior.}
\end{minipage}
	\vspace{-.2in}
	\label{fig:RealSegmentationNoise}
 \end{wrapfigure}

We propose a novel label noise model for segmentation annotations. Our model simulates the real annotation scenario, where an annotator uses a brush to delineate the boundary of an object. The noisy boundary can be considered a random yet continuous distortion of the true boundary. To capture this noise behavior, we propose a Markov process model. 
At each step of the process, two Bernoulli variables are used to control the expansion/shrinkage decision and the spatial-dependent expansion/shrinkage strength along the boundary. This model ensures the noisy label is a continuous distortion of the ground truth label along the boundary, as shown in Fig.~\ref{fig:RealSegmentationNoise-c}. Our model also includes a random flipping noise, which allows random (yet sparse) mislabels to appear even at regions far away from the boundary.

Based on our Markov label noise, we propose a novel algorithm to recover the true labels by removing the bias. Since correcting model bias without any reference is almost impossible~\citep{MassartNoise}, our algorithm requires a clean validation set, i.e., a set of well-curated annotations, to estimate and correct the bias introduced due to label noise. We prove theoretically that only a small amount of validation data are needed to fully correct the bias and clean the noise. Empirically, we show that a single validation image annotation is enough for the bias correction; this is quite reasonable in practice. 
Furthermore, we generalize our algorithm to an iterative method that repeatedly trains a segmentation model and corrects labels, until convergence. 
Since our algorithm, called \textit{Spatial Correction (SC)}, is separate from the DNN training process, it is agnostic to the backbone DNN architecture, and can be combined with any segmentation model. 
On a variety of benchmarks, our method demonstrates superior performance over different state-of-the-art (SOTA) baselines. To summarize, our contribution is three-folds.
\begin{itemize}[leftmargin=*]
\vspace{-0.3 cm}
    \item We propose a Markov model for segmentation label noise. To the best of our knowledge, this is the first noise model that is tailored for segmentation task and considers spatial correlation. 
    \vspace{-0.1 cm}
    \item We propose an algorithm to correct the Markov label noise. Although a validation set is required to combat bias, we prove that the algorithm only needs a small amount of validation data to fully recover the clean labels. 
    \vspace{-0.1 cm}
    \item We extend the algorithm to an iterative approach (SC) that can handle more general label noise in various benchmarks and we show that it outperforms SOTA baselines.
    \vspace{-0.3 cm}
\end{itemize}

\section{Related Work}

Methods in classification label noise can be categorized into two classes, i.e., model re-calibration and data re-calibration. Model re-calibration methods focus on training a robust network using given noisy labels. Some estimate a noise matrix through special designs of network architecture~\citep{Sukhbaatar2015ICLR,Goldberger2017TrainingDN} or loss functions~\citep{LossCorr,TrueLossCorr}. Some design loss functions that are robust to label noise~\citep{GCE,SCE,PeerLoss,CurriculumLoss,APL}. For example, \textit{generalized cross entropy} (GCE)~\citep{GCE} and \textit{symmetric cross entropy} (SCE)~\citep{SCE} combine both the robustness of mean absolute error and classification strength of cross entropy loss. Other methods~\citep{REL,ELR,ODLN} add a regularization term to prevent the network from overfitting to noisy labels. Model re-calibration methods usually have strong assumptions and have limited performance when the noise rate is high. Data re-calibration methods achieve SOTA performance by either selecting trustworthy data or correcting labels that are suspected to be noise. Methods like Co-teaching~\citep{coteaching} and~\citep{jiang2018mentornet,Coteachingplus} filter out noisy labels and train the network only on clean samples. Most recently,~\citet{tanaka2018joint,zheng2020errorbounded,zhang2021learning} propose methods that can correct noisy labels using network predictions. ~\citet{dividemix} extends these methods by maintaining two networks and relabeling each data with a linear combination of the original label and the confidence of the peer network that takes augmented input. 

\myparagraph{Training Segmentation Models with Label Noise.}
Most existing methods adapt methods for classification to the segmentation task. \cite{pickandlearn} utilize the sample re-weighting technique to train a robust model by adding more weights on reliable samples. \cite{TriNet} extend Co-teaching~\citep{coteaching} to Tri-teaching. Three networks are trained jointly, and each pair of networks alternatively select informative samples for the third network learning, according to the consensus and diﬀerence between their predictions. \cite{CharacterizingLE} corrects pixel-wise label noise directly using network prediction, while~\cite{superpixel} is based on superpixels. And \citet{liu2021adaptive} apply early learning techniques into segmentation. All these methods fail to address the unique challenges in segmentation label noise (see Section~\ref{sec:Illustration}), namely, the spatial correlation and bias. Since segmentation label noise concentrates around the boundary, methods taking every pixel independently will easily fail.

\section{Method}
\label{sec:Method}
We start by introducing our main intuition about each subsection before we discuss its details. In \Secref{sec:NoiseModel}, we aim at modelling the noise due to inexact annotations. When an annotator segments an image by marking the boundary, the error annotation process resembles a random distortion around the true boundary. In this sense, the noisy segmentation boundary can be obtained by randomly distorting the true segmentation boundary. We model the random distortion with a Markov process. Each Markov step is controlled by two parameters $\theta_1$ and $\theta_2$. $\theta_1$ controls the probability of expansion/shrinkage. It has probability $\theta_1$ to move towards exterior and $1-\theta_1$ to move towards interior. $\theta_2$ represents of the probability of marching,~i.e., a point on the boundary have probability $\theta_2$ to take a step and $1-\theta_2$ to halt. Fig.~\ref{fig:MP} illustrates such a process. We start with the true label, go through two expansion steps and one shrinkage step. At each step, we mark the flipped boundary pixels. If $\theta_1\neq0.5$, the random distortion will have a preference to expansion/shrinkage. This will result in a bias in the expected state. A DNN trained with such label noise be inevitably affected by the bias. Theoretically, this bias is challenging to be corrected by existing label noise methods, and in general by any bias-agnostic methods~\citep{MassartNoise}. Therefore, we require a reasonably small validation set to remove the bias.

In \Secref{sec:One-Step}, we propose a provably-correct algorithm to correct the noisy labels by removing this bias. We start by $T=1$. Since every pixel on the foreground/background boundary has the same probability to be flipped into noisy label, the expected state,~i.e.~the expectation of the Markov process, only has three cases, taking one step outside, taking one step inside, or staying unchanged. This indicates the relationship between the expected state and the true label can be linearized. We prove this using \textit{signed distance representation} in Lemma~\ref{sdf-representation}. If the expected state for each image is given, with only one corresponding true label, we can recover the bias in the linear equation. However, in practice, the DNN is learned to predict the expected state, and there will be an approximation error. Therefore, more validation data may be required to get a precise estimation. In Theorem~\ref{validation-size} we prove that with a fixed error and confidence level, the necessary validation set size is only $O(1)$.

The algorithm in \Secref{sec:One-Step} is designed for our Markov noise, where each point's moving probability only depends on its relative position on the boundary. In real-world, this probability can also be feature-dependent. To combat more general label noise, in \Secref{sec:SC}, we extend our algorithm to correct labels iteratively based on logits. In practice, logits are the network outputs before sigmoid. Unlike distance function, logits contain feature information. A larger absolute logit value means more confident the prediction can be. Subtracting the bias from logits can move boundary points according to features,~i.e., regions with large confidence move less than regions with small confidence. After correcting noisy labels, we retrain the DNN with new labels and do this iteratively until the estimated bias is small enough. We summarize our framework in \Figref{fig:Framework}.
\begin{figure}[tb!]
\vspace{-.05in}
\centering
\includegraphics[width=1\linewidth]{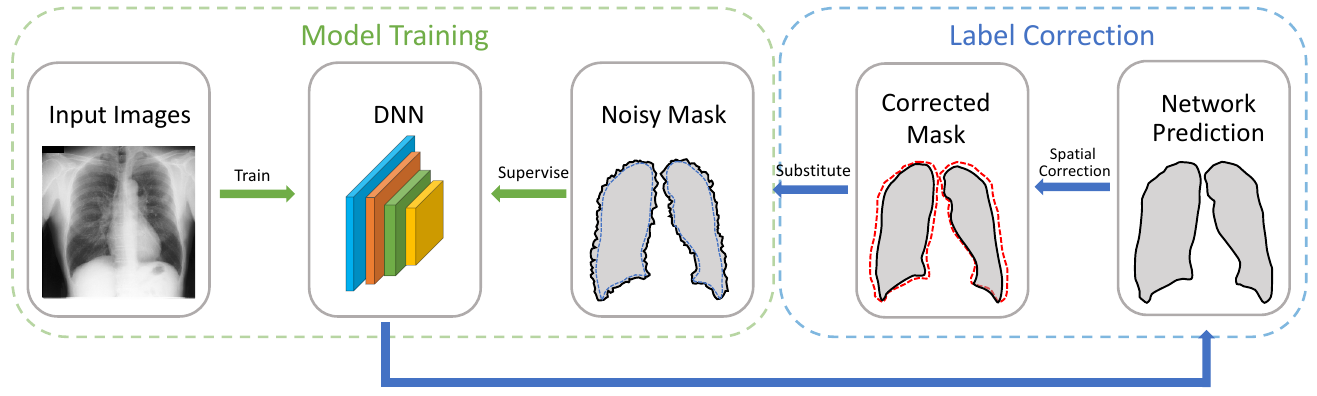}
\vspace{-.2in}
\caption{Framework of our method. We train a DNN using noisy labels. The learned DNN prediction boundary (red dashed line) is corrected to the new boundary (black solid line). We use corrected labels to re-train the network. The iterative algorithm can correct the noisy predictions to true labels progressively.}
\label{fig:Framework}
\vspace{-.15in}
\end{figure}

\myparagraph{Notations.} We assume a 2D input image $\mX\in\mathbb{R}^{H\times W\times C}$ with \textit{height} $H$, \textit{width} $W$, and \textit{channel} $C$, although our algorithm naturally generalizes to 3D images. $\mY=\{0,1\}^{H\times W\times L}$ is the underlying true segmentation mask with $L$ classes in a one-hot manner. When training a segmentation model with label noise, we are provided with a noisy training dataset $\mathcal{\tilde{D}}=\{(\mX_n, \tilde{\mY}_n)\}_{n=1}^N$. Here all noisy masks are sampled from the same distribution $P(\tilde{\mY}|\mX)$. $P(\mY|\mX)$ denotes the true label distribution. We use subscript $s\in I$ to represent pixel index, where $I=\{(i,j)|1\le i\le H, 1\le j\le W\}$ is the index set. The scalar form $X_s, Y_s$ denote the pixel value of index $s$ and its label. 
For the rest of the paper, we assume a binary segmentation task with Foreground (FG) and Background (BG) labels. Since $\mY$ is defined in one-hot manner, our formula and algorithm can be generalized to multi-class segmentation easily.

\subsection{Modeling Label Noise as A Markov Process}
\label{sec:NoiseModel}

\begin{figure}[bt!]
\centering
\includegraphics[width=1\linewidth]{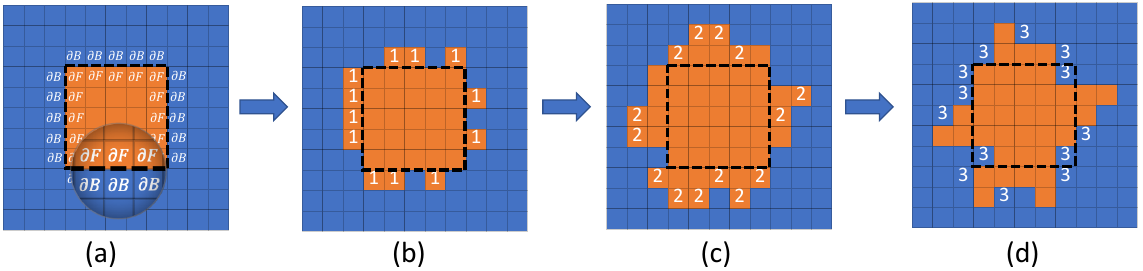}
\vspace{-0.2in}
\caption{Illustration of a 3-step Markov process. (a) The true label mask, where red pixels are foreground and blue pixels are background. We mark background boundary pixels and foreground boundary pixels as $\partial \mB$ and $\partial \mF$, respectively. (b) An expansion step. Pixels marked as '1' have been flipped into foreground. The flipped pixels are randomly chosen with probability $\theta_2$ from the $\partial \mB$ pixels in (a). (c) Another expansion step by flipping pixels marked as '2' to foreground. (d) A shrinkage step. Pixels marked as '3' were foreground in (c) but are flipped into background in (d).}
\label{fig:MP}
 \vspace{-.15in}
\end{figure}
For an input image $\mX$, we denote the clean and noisy masks $\mY$ and $\tilde{\mY}\in\{0,1\}^{H\times W}$. The finite Markov process is denoted as $M_{\bm{\epsilon}}(T, \theta_1, \theta_2)$, with $T$ denoting the number of steps. $\theta_1$ and $\theta_2$ are two Bernoulli parameters denoting the annotation preference and annotation variance, respectively. To further enhance the modeling ability, we also introduce random flipping noise at regions far away from the boundary into our model. This is achieved by adding a matrix-valued random noise $\bm{\epsilon}\sim\{\text{Bernoulli}(\theta_3)\}^{H\times W}$ into the final step. The formal definition of  $M_{\bm{\epsilon}}(T,\theta_1,\theta_2)$ is in \Defref{def:mutli-step}.

Denote by $\mF$, $\mB$ the foreground and background masks,~i.e.~$\mF = \mY$ and $\mB = \1 - \mY$. We define the boundary operator $\partial\cdot$ of $\mF$ and $\mB$. Let $\partial \mF = \1_{\{s|F_s=1, \exists r \in N_s, B_r=1\}}$ be the boundary mask of $\mF$,~i.e.~foreground pixels adjacent to $\mB$ holding value $1$, otherwise $0$. $N_s$ is the four-neighbor of index $s$. Similarly, let $\partial \mB$ be the boundary mask of $\mB$. Note that sets $\{\partial \mF=1\}$ and $\{\partial \mB=1\}$ are on the opposite sides of the boundary, as shown in Fig.~\ref{fig:MP}(a). For simplification, we will abuse the notation $\partial\mB$ for both a matrix and a set of indices $\{\partial \mB=1\}$. Readers will notice the difference easily. The Markov process at the $t$-th step generates the $t$-th noisy label $\tilde{\mY}^{(t)}$ based on the $(t-1)$-th noisy label $\tilde{\mY}^{(t-1)}$. Denote by $\mF^{(t-1)}$ and $\mB^{(t-1)}$ the foreground and background masks of $\tilde{\mY}^{(t-1)}$. 
\begin{definition}[Markov Label Noise]
Let $\tilde{\mY}^{(0)}=\mY$. For $t=0,1,...,T-1$, let $z_1^{(t)}\widesim[2]{i.i.d.} \text{Bernoulli}(\theta_1)$, and $\mZ_2^{(t)}\widesim[2]{i.i.d.}\{\text{Bernoulli}(\theta_2)\}^{H\times W}$. \begin{equation}
\label{eq:MarkovProcess}
    \tilde{\mY}^{(t+1)}= \tilde{\mY}^{(t)} + z_1^{(t)} \mZ_2^{(t)}\odot\partial\mB^{(t)} + (z_1^{(t)}-1) \mZ_2^{(t)}\odot\partial\mF^{(t)},
\end{equation}
The final output noisy label is  $\tilde{\mY}=\tilde{\mY}^{(T)}+ \bm{\epsilon}\odot\textbf{Sign}$, where $\textbf{Sign}=\mB^{(T)}\odot\mB - \mF^{(T)}\odot\mF$. The process is denoted by $M_{\bm{\epsilon}}(T, \theta_1, \theta_2)$.
\label{def:mutli-step}
\end{definition}
The random variable $z_1^{(t)}$ determines whether the noise is obtained by expanding or shrinking the boundary in each step $t$. If expansion, $z_1^{(t)}=1$, the error is taken by flipping the background boundary pixels $\partial \mB^{(t-1)}$ with probability $\theta_2$. This is encoded in the second Bernoulli $\mZ_2^{(t)}\odot\partial\mB^{(t)}$, where $\partial\mB^{(t)}$ is an indication matrix to restrict the flipping only happens among background boundary pixels. Similarly, if shrinkage, $z_1^{(t)}=0$, a pixel in $\partial \mF^{(t-1)}$ has probability $\theta_2$ to be flipped into background. Fig.~\ref{fig:MP} shows an example with 3 steps, corresponding to expansion, expansion and shrinkage.

\subsection{Label Correction by Removing Bias}
\label{sec:One-Step}
Suppose the Markov noise has underlying posterior $P(\tilde{\mY}|\mX)$, which is impossible to get a general explicit form due to the progressive spatial dependency. However, we can study the Bayes classifier $\tilde{c}(\mX) = \argmax_{\tilde{\mY}}P(\tilde{\mY}|\mX)$. For a fixed image $\mX$, we take $\tilde{\mY}\sim P(\tilde{\mY}|\mX)$ as a random variable. Then $\E[\tilde{\mY}]$ and $\tilde{c}(\mX)$ have,
\begin{equation}
    \tilde{c}(\mX) = [\E[\tilde{\mY}]]_{\ge0.5},
\end{equation}
where $[\E[\tilde{\mY}]]_{\ge0.5}$ means for each index $s$, $[\E[\tilde{\mY}]]_{\ge0.5}(s) = 1$, if $[\E[\tilde{Y}_s]]\ge0.5$, otherwise, $0$. Note that $\E[\tilde{\mY}]$ is a probability map while $[\E[\tilde{\mY}]]_{\ge0.5}$ is its (Bayes) prediction. Therefore, to recover the true Bayes classifier $c(\mX)$ is to build an equation between $\E[\tilde{\mY}]$ and $\E[\mY]$. we iterate our Markov model \Eqref{def:mutli-step} over $t=0,...,T-1$, plus the random flipping term and take expectation on both sides,
\begin{equation}
\label{eq:expect}
    \E[\tilde{\mY}] = \E[\mY] + \theta_1\theta_2\sum_{t=0}^{T-1}\E[\partial\mB^{(t)}] + (\theta_1-1)\theta_2\sum_{t=0}^{T-1}\E[\partial\mF^{(t)}] + \theta_3\E[\textbf{Sign}].
\end{equation}

\Eqref{eq:expect} defines the bias between $\E[\mY]$ and $\E[\tilde{\mY}]$. But the ambiguous representation of $\E[\partial\mB^{(t)}]$ makes it difficult to simplify the bias. In order to analyze it, we need to transform the explicit boundary representation $\partial\mB$ into an \textit{implicit function}. For a domain $\Omega$, we define its interface (boundary) as $\partial\Omega$, its interior and exterior as $\Omega^-$ and $\Omega^+$, respectively. An implicit interface representation defines the interface as the isocontour of some function $\phi(\cdot)$. Typically, the interface is defined as $\partial\Omega = \{s\in\Omega|\phi(s)=0\}$.

A straightforward implicit function is the \textit{signed distance function}. Following \cite{OsherF03}, a \textit{distance function} $d(s)$ is defined as $d(s) = \min_{t\in\partial\Omega} (|s-t|)$, implying that $d(s)=0$ on $s\in\partial\Omega$. A \textit{signed distance function} is an implicit function $\phi$ defined by $|\phi(s)| = d(s)$ for all $s\in\Omega$, $s.t.$ $\phi(s)=d(s)=0$ for $s\in\partial\Omega$, $\phi(s)=d(s)$ for $s\in\Omega^+$, and $\phi(s)=-d(s)$ for $s\in\Omega^-$. For a domain of grid points, we define the distance function $d(s)$ by the shortest path length from $s$ to $\partial\Omega$. But note that $\partial\Omega$ does not exist among image indices. It lies between $\partial\mB$ and $\partial\mF$. The corresponding definition for image indices is as follows.
\begin{definition}[Signed Distance Function]
For an Image $\mX$ with index set $I$, a graph $G(S, \mathcal{E})$ is constructed by $S=I$ and $\mathcal{E}=\{s\to t: s\in I, t\in N_s\}$, where $N_s$ is the four-neighbor of index $s$. Then the distance $d(s,t)$ is defined by the length of the \textit{shortest path} between $s$ and $t$ in graph $G$. The distance function is defined by
\begin{equation}
    d(s) =
    \begin{cases}
    \min_{t\in\partial\mB}d(s,t)+1, &\text{if } s\in\mB, \\
    \min_{t\in\partial\mF}d(s,t)+1, &\text{otherwise}.
    \end{cases}
\end{equation}
 The signed distance function $\phi(s)$ is then defined by $\phi(s)=d(s)$, if $s\in B$. Otherwise, $\phi(s)=-d(s)$.
\end{definition}
We start from the case $T=1$. With the signed distance representation, we have the following lemma.
\begin{lemma}
\label{sdf-representation}
$\phi$ is the signed distance function defined for $c(\mX)$. If $\theta_3\ll 0.5$, then 
\begin{equation}
    \tilde{\phi} =
    \begin{cases}
    \phi - [\theta_1\theta_2]_{\ge0.5}-1, &\textit{if } \theta_1 \ge 0.5, \\
    \phi + [1+\theta_1\theta_2-\theta_2]_{< 0.5 }+1 & \textit{otherwise}.
    \end{cases}
\end{equation}
Here $\tilde{\phi}$ is the signed distance function of $\tilde{c}(\mX)$ when $T=1$.
\end{lemma}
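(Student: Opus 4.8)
The plan is to reduce the statement to a pixel-by-pixel analysis of the marginal $\E[\tilde{Y}_s]$, sort the indices by their signed distance to the true interface, threshold at $0.5$, and then recognise the Bayes mask $\tilde{c}(\mX)$ as a one-layer dilation, erosion, or copy of the true mask, whose signed distance function differs from $\phi$ by an additive constant.

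First I would specialise the expectation identity~\eqref{eq:expect} to $T=1$. For a fixed image the true mask is deterministic, so $\E[\mY]=\mY=c(\mX)$ and $\partial\mB^{(0)}=\partial\mB$, $\partial\mF^{(0)}=\partial\mF$, giving $\E[\tilde{\mY}]=\mY+\theta_1\theta_2\,\partial\mB+(\theta_1-1)\theta_2\,\partial\mF+\theta_3\,\E[\textbf{Sign}]$; by the stated relation $\tilde{c}(\mX)=[\E[\tilde{\mY}]]_{\ge0.5}$ it then suffices to evaluate this at each index and threshold. Using $\phi$ I would split $I$ into the four layers $\{\phi\le-2\}$, $\{\phi=-1\}=\partial\mF$, $\{\phi=1\}=\partial\mB$, $\{\phi\ge2\}$; on each layer $\partial\mB$ and $\partial\mF$ are constant, so $\E[\tilde{Y}_s]$ becomes an explicit function of $(\theta_1,\theta_2,\theta_3)$ alone. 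The only delicate term is $\theta_3\,\E[\textbf{Sign}_s]$, but $\textbf{Sign}_s$ is $\pm$ the post-step label $\mB^{(1)}_s$ or $\mF^{(1)}_s$, and a single Markov step can flip only the two boundary layers, so $\E[\textbf{Sign}_s]=\pm1$ in the interior and has an elementary closed form on $\partial\mB,\partial\mF$. One reads off $\E[\tilde{Y}_s]=\theta_1\theta_2+O(\theta_3)$ on $\partial\mB$, $=1+\theta_1\theta_2-\theta_2+O(\theta_3)$ on $\partial\mF$, $=O(\theta_3)$ on $\{\phi\ge2\}$, and $=1-O(\theta_3)$ on $\{\phi\le-2\}$.

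Next I would use $\theta_3\ll0.5$ (quantitatively: $\theta_3$ small relative to the distance of $\theta_1\theta_2$ and of $1+\theta_1\theta_2-\theta_2$ from $0.5$) to carry the $\theta_3$-perturbations through the threshold: the two interior layers keep their labels, $\partial\mB$ becomes foreground exactly when $[\theta_1\theta_2]_{\ge0.5}=1$, and $\partial\mF$ becomes background exactly when $[1+\theta_1\theta_2-\theta_2]_{<0.5}=1$. These two events are mutually exclusive — summing the inequalities forces $\theta_2>1$ — and moreover $\theta_1\ge0.5$ rules out the second while $\theta_1<0.5$ rules out the first, which is exactly the case split in the statement. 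Hence $\tilde{c}(\mX)$ is the true mask dilated by one pixel layer when $\theta_1\ge0.5$ and $[\theta_1\theta_2]_{\ge0.5}=1$, eroded by one layer when $\theta_1<0.5$ and $[1+\theta_1\theta_2-\theta_2]_{<0.5}=1$, and unchanged otherwise. Finally I would recompute the graph distances to the new interface: under a one-layer dilation the old layer $\{\phi=1\}$ becomes $\partial\tilde{\mF}$ and $\{\phi=2\}$ becomes $\partial\tilde{\mB}$, so every layer is relabelled and — modulo the interface subtlety discussed below — $\tilde\phi=\phi-c$ for the constant $c$ appearing in the statement; symmetrically for erosion ($\tilde\phi=\phi+c$), and trivially in the no-change case. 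Matching $c$ against the bracket indicators produces the two displayed formulas.

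I expect the discrete signed-distance bookkeeping in the last step to be the crux. The ``$+1$'' offset baked into $d(\cdot)$ puts a value-gap of $2$ straddling the true interface, so when the interface moves one layer outward the layer that switches sides changes by $2$ while the adjacent layers change by $1$; reconciling this with a genuinely pointwise shift requires either a cell-centred (half-integer) signed-distance convention — under which a one-layer move is a clean constant shift — or a mild regularity assumption on the true shape (e.g.\ that a nearest $\partial\mB$-pixel of an interior background pixel is reached through a nearest $\partial\mF$-pixel, so the distances to $\partial\mB$ and to $\partial\mF$ differ by exactly one). A secondary nuisance is turning the informal hypothesis $\theta_3\ll0.5$ into an explicit smallness condition under which no thresholding decision is ever flipped.
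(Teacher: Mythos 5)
Your proposal follows essentially the same route as the paper's proof: specialize the expectation identity to $T=1$, evaluate $\E[\tilde{Y}_s]$ layer by layer, threshold at $0.5$, establish mutual exclusivity of the two flip events to get the case split on $\theta_1$, and read off the signed-distance shift of the one-layer dilated/eroded mask. The discrete bookkeeping issue you flag at the end is real but the paper simply asserts the constant shift without addressing it (and its claim that $\textbf{Sign}_s=0$ on the boundary layers is likewise only true up to the $O(\theta_3)$ terms you track), so your version is, if anything, the more careful one.
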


With the \textit{signed distance function}, we can explicitly define the bias by $\Delta=\frac{1}{|I|}\sum_{s\in I}(\tilde{\phi}_s-\phi_s)$, where $|I|$ is the cardinality of the index set, also the image size. If $\tilde{c}(\mX)$ is perfectly learned, the difference between $\tilde{\phi}$ and $\phi$,~i.e.~$\Delta$ can be estimated with a small clean validation set, even if there is only one image. Then the original function $\phi$ can be recovered among training set by $\phi = \tilde{\phi} + \Delta$. Then $c(\mX)$ can be simply obtained by $[\phi]_{\le0}$. However, in practice, the classifier learned, denoted by $\hat{c}(\mX)$, is different from the noisy Bayes optimal $\tilde{c}(\mX)$. This will lead an error to $\hat{\phi}$, \textit{signed distance function} of $\hat{c}(\mX)$, for which more validation data may be required. The naive algorithm is,
 \begin{enumerate}[leftmargin=*]
 \vspace{-0.1 in}
     \item \textit{Bias estimation.} Train a DNN with noisy labels. Given a clean validation set $\{\mathbf{x}_v,\mathbf{y}_v\}_{v=1}^{V}$, the bias is estimated by
     \begin{equation}
     \label{eq:bias}
         \hat{\Delta}=\frac{1}{V}\sum_{v=1}^V\frac{1}{|I|}\sum_{s\in I}(\hat{\phi}^v_s - \phi^v_s).
     \end{equation}
      \vspace{-0.1 in}
     \item \textit{Label correction.} Correct training labels by $c'(\mX)=[\phi']_{\le0}$, where $\phi'=\hat{\phi}-\hat{\Delta}$. Then retrain the network using corrected labels.
     \vspace{-0.1 in}
 \end{enumerate}
In Theorem~\ref{validation-size} we prove that with a small validation size, the above algorithm can recover the true label with bounded error.
\begin{theorem}
\label{validation-size}
If $\exists$ $\varepsilon_0>0$, $\varepsilon_1 > 0$, s.t.~$\E_{\mX}\left[\sup_{s}|\hat{\phi}(X_s)-\tilde{\phi}(X_s)| \right]\le \varepsilon_0$ and $\sup_{\mX}\left[\sup_{s}|\hat{\phi}(X_s)-\tilde{\phi}(X_s)| \right]\le \varepsilon_1$ hold for the learned classifier $\hat{c}(\mX)$, then $\forall \varepsilon > \varepsilon_0$ and for a fixed confidence level $0<\alpha\le1$, with
\begin{equation}
    \label{eq:valsize}
    V\geq \frac{\varepsilon_1^2}{2(\varepsilon-\varepsilon_0)^2}\log \left(\frac{2|I|}{\alpha}\right)
\end{equation}
number of clean samples $\{\mathbf{x}_v,\mathbf{y}_v\}_{v=1}^{V}$,
 $\phi'$ can be recovered within $\varepsilon+\varepsilon_0$ error with probability at least $1-\alpha$, i.e.~$P(\E_{\mX}\left[ \sup_s|\phi'(X_s)-\phi(X_s)|\right]\le\varepsilon+\varepsilon_0)\ge 1-\alpha$.
\end{theorem}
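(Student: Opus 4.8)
The plan is to decompose the quantity $\E_{\mX}[\sup_s|\phi'(X_s)-\phi(X_s)|]$ using the triangle inequality into a piece controlled by the learning error $\varepsilon_0$ and a piece controlled by the statistical fluctuation of the empirical bias estimate $\hat\Delta$ around the true bias $\Delta$, and then to bound the latter via a Hoeffding/concentration argument whose failure probability is pushed below $\alpha$ by a union bound over the $|I|$ pixels. First I would write $\phi' = \hat\phi - \hat\Delta$ and recall from Lemma~\ref{sdf-representation} that, up to the learning error, $\tilde\phi - \phi$ is a \emph{constant} shift (one of the two cases), so the true bias satisfies $\phi = \tilde\phi + \Delta$ with $\Delta$ this constant. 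Then
\[
    |\phi'(X_s)-\phi(X_s)| = |\hat\phi(X_s) - \hat\Delta - \tilde\phi(X_s) - \Delta| \le |\hat\phi(X_s)-\tilde\phi(X_s)| + |\hat\Delta - \Delta|,
\]
so taking $\sup_s$ and then $\E_{\mX}$ gives $\E_{\mX}[\sup_s|\phi'-\phi|] \le \varepsilon_0 + |\hat\Delta-\Delta|$ by the first hypothesis. It therefore suffices to show $|\hat\Delta-\Delta|\le\varepsilon$ with probability at least $1-\alpha$.

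Next I would analyze $\hat\Delta - \Delta$. Writing $\hat\Delta = \frac1V\sum_v \frac1{|I|}\sum_s(\hat\phi^v_s - \phi^v_s)$ and $\Delta = \E_{\mX}[\frac1{|I|}\sum_s(\tilde\phi(X_s)-\phi(X_s))]$ — note $\Delta$ is the population mean of the per-image average shift, which is well-defined since that average is essentially constant — I would split each summand as $(\hat\phi^v_s-\phi^v_s) = (\tilde\phi^v_s-\phi^v_s) + (\hat\phi^v_s-\tilde\phi^v_s)$. The first part, averaged over $v$, concentrates around $\Delta$; the second part is bounded in absolute value by $\varepsilon_1$ uniformly (second hypothesis), and in expectation by $\varepsilon_0$ (first hypothesis). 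To get a clean concentration statement I would instead treat the whole per-image quantity $Z_v := \frac1{|I|}\sum_s(\hat\phi^v_s-\phi^v_s)$ as an i.i.d.\ sample; its mean differs from $\Delta$ by at most $\varepsilon_0$, and $\sup_s|\hat\phi^v_s - \phi^v_s - (\tilde\phi^v_s-\phi^v_s)| \le \varepsilon_1$ means each $Z_v$ lies within an interval of length $2\varepsilon_1$ around its (image-independent) noiseless value, so Hoeffding's inequality applies with range $\varepsilon_1$. Actually the cleanest route is a union bound: for each fixed pixel $s$, apply Hoeffding to $\frac1V\sum_v(\hat\phi^v_s-\phi^v_s)$ whose terms lie in an interval of width $2\varepsilon_1$, giving deviation $>\varepsilon-\varepsilon_0$ from the mean with probability at most $2\exp(-2V(\varepsilon-\varepsilon_0)^2/\varepsilon_1^2)$; summing over $s\in I$ and combining with the $\le\varepsilon_0$ bias of the mean yields the result once the right-hand side is $\le\alpha$, which is exactly the stated bound on $V$.

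Finally I would assemble: on the good event (probability $\ge 1-\alpha$) we have $|\hat\Delta-\Delta|\le\varepsilon$, hence $\E_{\mX}[\sup_s|\phi'(X_s)-\phi(X_s)|]\le\varepsilon_0+\varepsilon = \varepsilon+\varepsilon_0$, which is the claim. The main obstacle I anticipate is making the concentration argument rigorous given that $\tilde\phi-\phi$ is only \emph{approximately} constant (Lemma~\ref{sdf-representation} holds for $T=1$ and under $\theta_3\ll 0.5$, with random flipping contributing small perturbations) — one must be careful that the "noiseless shift" is genuinely image-independent so that $\Delta$ is a well-defined scalar and the per-pixel sums over $v$ are sums of i.i.d.\ bounded variables; handling the $\theta_3$ random-flipping term and the precise interaction between the $\sup_s$ (inside the expectation) and the union bound over pixels (which wants the error controlled pixelwise) is the delicate bookkeeping step. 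A secondary subtlety is ensuring the two hypotheses on $\hat\phi-\tilde\phi$ are used with the correct order of $\sup_s$ and $\E_{\mX}$ so that no constant is lost.
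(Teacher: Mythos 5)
Your proposal follows essentially the same route as the paper's proof: decompose the recovery error into the model error $\varepsilon_0$ plus the bias-estimation error $|\hat{\Delta}-\Delta|$ via the triangle inequality, then control $|\hat{\Delta}-\Delta|$ by applying Hoeffding's inequality pixelwise and taking a union bound over the $|I|$ pixels, which is exactly where the $\log(2|I|/\alpha)$ factor comes from. The only bookkeeping difference is that the paper first upper-bounds by the absolute differences $|\hat{\phi}^v_s-\tilde{\phi}^v_s|\in[0,\varepsilon_1]$ (Hoeffding range $\varepsilon_1$), whereas your signed per-pixel sums have range $2\varepsilon_1$ and would cost a factor of $4$ in the sample-size constant; aside from that constant, the argument matches.
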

Proofs of Lemma~\ref{sdf-representation} and Theorem~\ref{validation-size} are provided in Appendix. The theorem shows that for a fixed error level $\varepsilon$ and a fixed confidence level $\alpha$, the validation size $V$ required is a constant, logarithm of the image size. Note that $\varepsilon_0$ is the mean model error among images and $\varepsilon_1$ is the supremum model error among images. It holds naturally that $\varepsilon_0\le\varepsilon_1$. $\varepsilon_1$ is to constrain the model prediction variance on different images. If $\varepsilon_0 = \varepsilon_1$, the model prediction quality is similar over images. Even in the worst case, where model cannot predict any right label, $\varepsilon_1$ is still bounded by the image size $|I|$.

\subsection{Iterative Label Correction}
\label{sec:SC}
In this section, we extend our algorithm to more general label noise. In our Markov model, all points on the boundary are moving with the same probability along the normal direction of a \textit{signed distance function}. The boundary in real-world noise, however, could have feature-dependent moving probability. We achieve this by using \textit{logit function representation}. For an image $\mX$, its \textit{logit function} $f(\mX)$ is defined by $\E\mY = \sigma\circ f(\mX)$, where  $\sigma(\cdot)$ is a \textit{sigmoid function} and $\mY$ is the segmentation label. Note that $f(\mX)$ is positive in interior $\Omega^{-}$ and negative in $\Omega^{+}$, so the \textit{implicit function} is defined by $-f(\mX)$. The bias estimation step remains the same. But at the label correction step, we apply the bias on $\hat{f}(\mX)$ instead of $\hat{\phi}$. In practice, $\hat{f}(\mX)$ is the model output before the sigmoid function. The label correction step by \textit{logit function} is
\begin{equation}
\label{eq:labelcorrection}
    f'(\mX) = \hat{f}(\mX) + \lambda \exp\left[-\frac{\hat{\phi}^2}{2(\gamma\hat{\Delta})^2}\right].
\end{equation}
$\lambda$ is the bias adapted to logit function and the exponential term is a decay function to constrain the bias around $\hat{\phi}=0$. And the decay factor is $\gamma\hat{\Delta}$, where $0<\gamma\le1$ is a hyper-parameter, usually set to be $1$. The bias $\lambda$ is defined by $\lambda=\inf \hat{f}|_{0\le\hat{\phi}_s\le\hat{\Delta}} $, if $\hat{\Delta}>0$, and $\lambda=\sup \hat{f}|_{\hat{\Delta}\le\hat{\phi}_s\le0} $, if $\hat{\Delta}<0$. $f|_{\Omega}$ is function $f$ restricted on domain $\Omega$. Although the bias is decayed with the same scale over the distance transform, the gradient of logit function, however, varies along the normal direction. Therefore, \Eqref{eq:labelcorrection} actually moves points on the interface with different degrees. Points with smaller absolute gradient moves more. Since the algorithm corrects noisy labels spatially correlated to the boundary, we refer to it as \textit{Spatial Correction (SC)}. We present our SC as an iterative approach in Algorithm 1. In practice, the hyper-parameter $\gamma$ is usually set to be $1$, and the algorithm can terminate after only $1$ iteration.

\begin{algorithm}
\caption{Spatial Correction}
\label{alg:LabelCorrection}
\textbf{Input:} Noisy training dataset $\mathcal{\tilde{D}}$, a small clean validation dataset $\mathcal{V}$, and a hyper-parameter $\gamma$. \newline
\textbf{Output:} A robust DNN $\hat{f}$ trained with denoised dataset. 

\begin{algorithmic}
\STATE Train a DNN $\hat{f}$ with $\mathcal{\tilde{D}}$.
\STATE Predict labels on $\mathcal{V}$ using $\hat{f}$, and estimate $\hat{\Delta}$ by \Eqref{eq:bias}.
\WHILE{$|\hat{\Delta}|\ge1$}
   \STATE Replace training labels with $[f'(\mX)]_{\ge0}$, where $f'(\mX)$ is calculated by \Eqref{eq:labelcorrection}.
   \STATE Re-train $\hat{f}$ with corrected labels.
   \STATE Predict labels on $\mathcal{V}$ using new $\hat{f}$, and estimate $\hat{\Delta}$ by \Eqref{eq:bias}.
\ENDWHILE
\RETURN $\hat{f}$.
\end{algorithmic}
\end{algorithm}

\section{Experiments}
\label{sec:Experiment}
In this section, we provide an extensive evaluation of our approach with multiple datasets and noise settings. We also show in ablation studies, that our method is robust to high noise level and the required clean validation set size can be extremely small.

\subsection{Datasets and Implementation Details}
\myparagraph{Synthetic noise settings.} We use three public medical image datasets, \JSRT dataset~\citep{JSRT}, \ISIC dataset~\citep{ISIC}, and \Brats dataset~\citep{Brats2020-1, Brats2020-2, Brats2020-3, Brats2020-4}. \JSRT contains 247 images for chest CT and three types of organ structures: lung, heart and clavicle, all with clean ground-truth labels~\citep{SCR}. We randomly split the data into training (148 images), validation (24 images), and test (75 images) subsets. \ISIC is a skin lesion segmentation dataset with 2000 training, 150 validation, 600 test images. Following standard practice \cite{pickandlearn,CharacterizingLE,superpixel}, we  resize all images in both datasets to $256\times256$ in resolution. \Brats is a brain tumor 3D segmentation dataset with 369 training volumes. Since only training labels are accessible, we randomly split these 369 volumes into training (200 volumes), validation (19 volumes) and test (150 volumes) subsets. We resize all volumes to $64\times128\times128$ in resolution.

For each of these three datasets, we use three noise settings, denoted by $S_E$, $S_S$ and $S_M$. $S_E$ and $S_S$ are two settings synthesized by our Markov process with $\theta_1>0.5$ (expansion) and $\theta_1<0.5$ (shrinkage), respectively.~\Figref{fig:Noise} shows examples of our synthesized label noise. We also include the mix of random dilation and erosion noise $S_M$ used by previous work~\citep{pickandlearn, CharacterizingLE, HumanError2020}. This is achieved by randomly dilate or erode a mask with a number of pixels. Note that our Markov label noise can theoretically include this type of noise by setting $\theta_1 = 0.5$. Detailed parameters for these settings are provided in the Appendix. 

We use a simple U-Net as our backbone network for \JSRT and \ISIC datasets and a 3D-UNet for \Brats dataset. The hyper-parameter $\gamma$ is set to be 1 and total iteration is 1. 

\begin{figure}[t]
\centering
  \begin{subfigure}{0.22\textwidth}
  \centering
  \includegraphics[width=\linewidth]{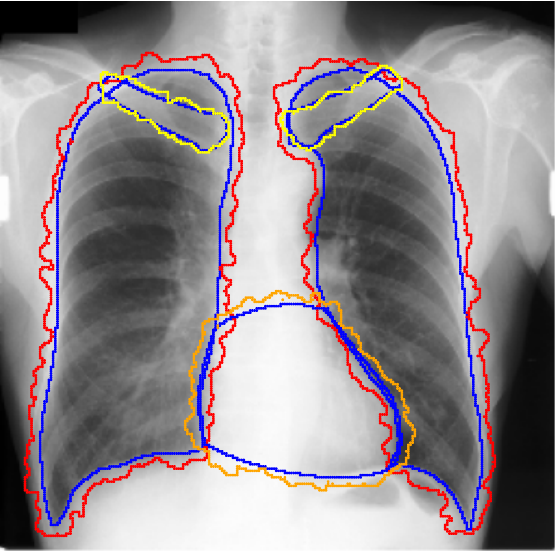}
  \caption{\JSRT $S_E$.}
  \label{fig:JSRT-e}
  \end{subfigure}
  \begin{subfigure}{0.22\textwidth}
  \centering
  \includegraphics[width=\linewidth]{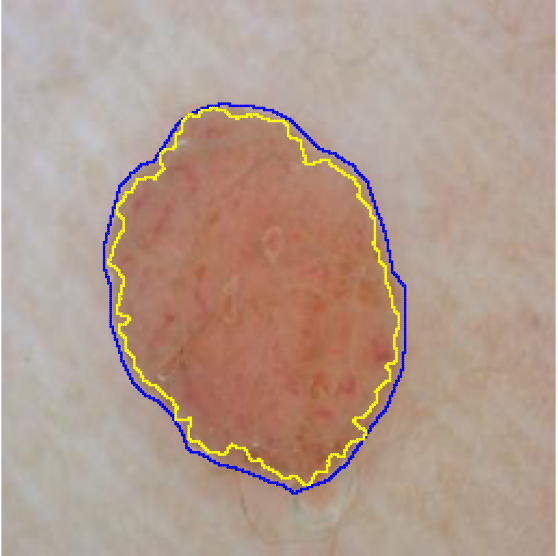}
  \caption{\ISIC $S_S$.}
  \label{fig:ISIC-s}
  \end{subfigure}
  \begin{subfigure}{0.44\textwidth}
  \centering
  \includegraphics[width=\linewidth]{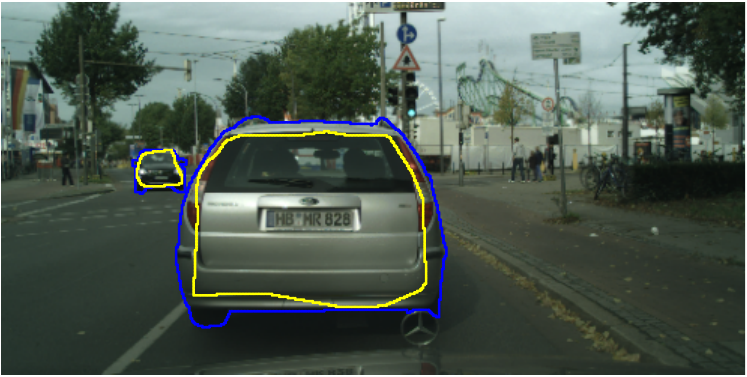}
  \caption{\textit{Cityscapes}.}
  \label{fig:Cityscapes}
  \end{subfigure}
  \vspace{-0.1in}
  \caption{Examples of synthetic and real-world noise. In each image, blue line is the true segmentation boundary, and all other colors are corresponding noisy boundaries. We removed the random flipping noise in visualization to focus on the boundary.}
  \label{fig:Noise}
  \vspace{-0.1in}
\end{figure}

\myparagraph{Real-world label noise.} 
To evaluate with real-world label noise is challenging. We are not aware of any public medical image segmentation dataset that has both true labels and noisy labels from human annotators. Therefore, we use a multi-annotator dataset, \LIDC dataset~\citep{LIDC-1, LIDC-2, LIDC-3}, and the coarse segmentation in a vision dataset, \textit{Cityscapes}~\citep{Cityscapes}. The \LIDC dataset consists of 1018 3D thorax CT scans where four radiologists have annotated multiple lung nodules in each scan. The dataset was annotated by 12 radiologists, and it is not possible to match an annotation to an expert. We use the majority voting as the true labels and the union of four annotations as noisy labels. We process and split the data exactly the same way as \citet{PUnet}. \textit{Cityscapes} dataset contains 5000 finely annotated images along with a coarse segmentation by human annotators that we use as the ``noisy label''. We only focus on the `car' class because (1) cars are popular objects and are frequently included in images; (2) the coarse annotation of cars is very similar to noisy annotation in medical imaging -- they are reasonable distortions of the clean label without changing the topology. See \Figref{fig:Cityscapes} for an example. The detailed settings of \LIDC and \textit{Cityscapes} can be found in Appendix \ref{sec:settings}.

\myparagraph{Baselines.} 
We compare the proposed SC with SOTA learning-with-label-noise methods from both classification (GCE~\citep{GCE}, SCE~\citep{SCE}, CT+~\citep{Coteachingplus}, ELR~\citep{ELR}), CDR~\citep{REL} and segmentation contexts (QAM~\citep{pickandlearn}, CLE~\citep{CharacterizingLE}). Technical details of these baseline methods are provided in Appendix \ref{sec:baselines}. Our method requires a small clean validation set whereas most baselines do not. For a fair comparison, we also use the clean validation set to strengthen the baselines. In particular, we pretrain the baselines using the clean validation dataset, and then add the validation images and their clean labels into the training set. Note that our method (SC) is only trained on the original noisy trianing set; it only uses the validation set to estimate bias.

\begin{figure}[bt!]
\centering
\includegraphics[width=1\linewidth]{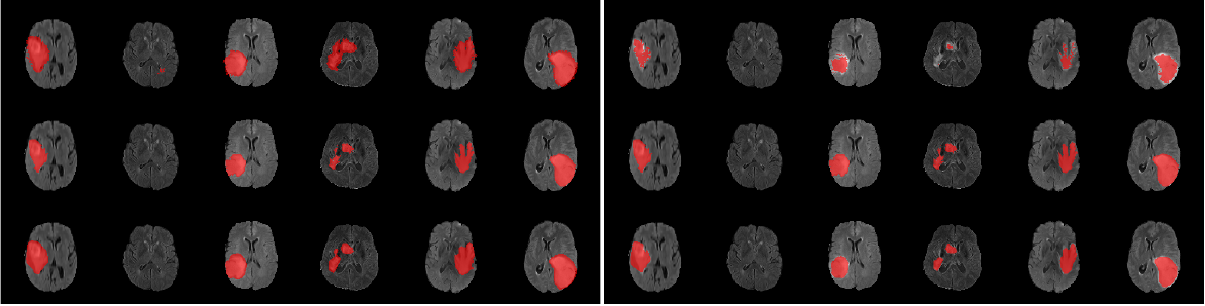}
\vspace{-0.2in}
\caption{Label correction results for \Brats datasets with $S_E$ (Left) and $S_S$ (Right) settings. From top to bottom, the rows are sample slices with noisy masks, with true masks, and with corrected masks, respectively.} 
\label{fig:labelcorrection}
 \vspace{-.2in}
\end{figure}

\begin{table}[tb!]
\small
\centering
\caption{Mean DSC (in percent) and standard deviation for five models trained on three noisy settings. Method with best mean DSC is highlighted for each noise setting.}
\label{tab:Result}
     \begin{adjustbox}{width=\textwidth, center}
        \begin{tabular}{clcccccc}
            \toprule
            \multirow{2}{*}{ } & \multirow{2}{*}{Method} &
                \multicolumn{4}{c}{JSRT} & \multirow{2}{*}{ISIC 2017} &
                \multirow{2}{*}{Brats 2020}\\
                \cdashline{3-6}
                &
                &Lung
                &Heart
                &Clavicle 
                &Total \\
            \hline
            \multirow{8}{*}{$S_E$} 
                                & GCE &$79.43\pm0.17$\:
                        &$71.70\pm0.36$\: & $58.56\pm0.14$\: & $69.90\pm0.12$\: & $73.12\pm0.27$ &
                        $68.07\pm0.78$\\
                                 & SCE & $81.03\pm1.98$\: & $79.24\pm2.90$\: & $73.38\pm1.98$\: & $77.88\pm1.45$\: & $74.06\pm0.92$ & $68.04\pm1.71$ \\
                                 & CT+ & $82.08\pm0.06$\: & $66.57\pm0.08$\: & $44.63\pm0.10$\: & $64.43\pm0.07$\: &$68.65\pm0.20$ 
                         &$70.69\pm0.72$\\
                                 & ELR & $79.45\pm0.26$\: & $79.69\pm0.56$\: & $\textBF{79.37}\pm\textBF{0.18}$\: & $79.51\pm0.16$\: & $72.64\pm0.99$ & $67.56\pm0.82$\\
                                & CDR & $78.77\pm0.36$\: & $73.64\pm1.03$\: & $63.77\pm0.93$\: & $72.06\pm0.50$\: & $72.76\pm0.95$ & $66.95\pm2.23$ \\
                                 &QAM & $78.54\pm0.45$\:
                        &$78.63\pm0.20$\: & $78.71\pm0.14$\: &$78.63\pm0.14$\: &$71.32\pm0.45$ 
                         &$--$\\
                                 & CLE & $79.49\pm0.83$\: & $73.19\pm0.65$\: & $45.75\pm2.48$\: & $66.14\pm0.78$\: & $72.95\pm1.21$
                         & $65.93\pm1.39$\\
                                 \cline{2-8}
                             &SC(Ours) & $\textBF{91.62}\pm\textBF{2.85}$\: & $\textBF{89.19}\pm\textBF{0.96}$\: & $78.11\pm1.35$\: & $\textBF{86.04}\pm\textBF{1.76}$\: & $\textBF{80.64}\pm\textBF{0.63}$ & $\textBF{77.78}\pm\textBF{0.85}$\\
            \hline\hline
            \multirow{8}{*}{$S_S$} 
                                & GCE &$88.31\pm0.40$\:
                        &$90.84\pm0.17$\: & $71.53\pm0.36$\: & $83.56\pm0.19$\: & $52.44\pm3.47$ &
                        $47.65\pm1.42$\\
                                 & SCE & $68.04\pm1.71$\: & $75.83\pm1.59$\: & $\textBF{82.49}\pm\textBF{0.30}$\: & $78.67\pm1.38$\: & $53.03\pm3.34$ & $47.01\pm2.11$ \\
                                 & CT+ & $93.55\pm0.11$\: & $84.11\pm0.14$\: & $53.86\pm0.17$\: & $77.18\pm0.11$\: &$67.55\pm0.53$ 
                         &$71.69\pm0.01$\\
                                 & ELR & $74.15\pm0.95$\: & $71.15\pm0.41$\: & $71.76\pm0.25$\: & $72.35\pm0.37$\: & $50.21\pm3.41$ & $52.36\pm5.48$\\
                                & CDR & $83.20\pm2.06$\: & $83.51\pm2.13$\: & $75.53\pm1.28$\: & $80.74\pm1.04$\: & $50.87\pm1.56$ & $51.51\pm7.98$ \\
                                 &QAM & $72.76\pm0.91$\:
                        &$68.71\pm1.51$\: & $70.68\pm0.80$\: &$70.72\pm0.91$\: &$52.95\pm3.10$ 
                         &$--$\\
                                 & CLE & $81.97\pm1.45$\: & $83.86\pm1.30$\: & $50.56\pm2.83$\: & $72.13\pm1.30$\: & $54.76\pm1.15$
                         & $49.81\pm9.63$\\
                                 \cline{2-8}
                             &SC(Ours) & $\textBF{94.41}\pm\textBF{0.10}$\: & $\textBF{92.05}\pm\textBF{0.48}$\: & $75.78\pm2.02$\: & $\textBF{87.41}\pm\textBF{0.81}$\: & $\textBF{75.97}\pm\textBF{1.73}$ & $\textBF{72.71}\pm\textBF{1.20}$\\
            \hline\hline
            \multirow{8}{*}{$S_M$} 
                                & GCE &$86.51\pm0.39$\:
                        &$79.74\pm0.93$\: & $55.83\pm0.74$\: & $74.03\pm0.41$\: & $77.98\pm0.33$ &
                        $73.10\pm0.38$\\
                                 & SCE & $86.73\pm0.54$\: & $86.18\pm1.21$\: & $70.35\pm4.03$\: & $81.09\pm1.20$\: & $79.00\pm0.72$ & $70.52\pm2.67$ \\
                                 & CT+ & $86.32\pm0.22$\: & $73.35\pm0.08$\: & $37.50\pm0.41$\: & $65.72\pm0.21$\: &$74.97\pm0.39$ 
                         &$71.34\pm0.18$\\
                                 & ELR & $87.90\pm0.74$\: & $\textBF{89.09}\pm\textBF{0.84}$\: & $71.28\pm1.93$\: & $82.76\pm0.53$\: & $77.39\pm2.91$ & $72.33\pm4.71$\\
                                & CDR & $85.87\pm1.39$\: & $82.84\pm1.81$\: & $63.75\pm4.46$\: & $77.48\pm1.55$\: & $79.16\pm1.07$ & $74.21\pm1.54$ \\
                                 &QAM & $85.64\pm1.26$\:
                        &$87.79\pm1.15$\: & $67.43\pm5.73$\: &$80.29\pm2.49$\: &$76.26\pm0.72$ 
                         &$--$\\
                                 & CLE & $86.85\pm0.84$\: & $83.90\pm0.93$\: & $57.40\pm2.52$\: & $76.05\pm1.04$\: & $77.11\pm1.55$
                         & $\textBF{75.03}\pm\textBF{0.77}$\\
                                 \cline{2-8}
                             &SC(Ours) & $\textBF{91.51}\pm\textBF{0.62}$\: & $87.14\pm2.15$\: & $\textBF{72.01}\pm\textBF{1.19}$\: & $\textBF{83.55}\pm\textBF{0.93}$\: & $\textBF{79.44}\pm\textBF{0.65}$ & $72.89\pm0.98$\\
            \bottomrule
        \end{tabular}
     \end{adjustbox}
     \vspace{ -.1in}
\end{table}

\subsection{Results}
Table~\ref{tab:Result} shows the segmentation results of different methods with synthetic noisy label settings on \JSRT, \ISIC and \Brats dataset. Note that QAM cannot be applied to \Brats dataset because their network is designed for 2D only. We compare DICE score (DSC) on testing sets (against the clean labels). For each setting, we train 5 different models, and report the mean DSC and standard deviation. In $S_E$ and $S_S$, where biases show up in noisy labels, the proposed method outperforms the baselines by a big leap in total case. The compared methods, however, only work when little bias is included, like $S_M$. $S_M$ is equivalent to setting $\theta_1=0.5$ in our Markov model, resulting in  $\Delta=0$. We also test the proposed method on real-world label noise, results shows in Table~\ref{tab:real-world}. \Figref{fig:labelcorrection} shows examples of label correction results. We provide more qualitative results in the Appendix~\ref{sec:Qualitative}.

\begin{table}[tb!]
\small
\centering
\caption{Mean DSC (in percent) and standard deviation for five models trained on \textit{Cityscapes} datasets and \LIDC dataset. Method with best mean DSC is highlighted.}
\label{tab:real-world}
     \begin{adjustbox}{width=\textwidth, center}
        \begin{tabular}{ccccccccc}
        \toprule
        Dataset &GCE & SCE & CT+ & ELR & CDR & QAM & CLE & SC(Ours) \\
        \midrule
        \textit{Cityscapes} &$72.02\pm0.14$ & $76.35\pm0.13$ & $72.79 \pm 0.14$ & $73.46 \pm 0.17$ & $71.26\pm0.16$ & $71.33 \pm 0.14$ & $74.41 \pm 0.12$ & $\textBF{82.30} \pm \textBF{0.10}$ \\
        \LIDC & $42.56\pm0.58$ & $50.40\pm0.15$ & $40.25 \pm 0.11$ & $49.76 \pm 0.40$ & $49.41\pm0.79$ & $43.56 \pm 0.29$ & $49.79 \pm 0.19$ & $\textBF{53.72} \pm \textBF{1.70}$ \\
        \bottomrule
        \end{tabular}
    \end{adjustbox}
\end{table}

\subsection{Ablation Study}
\myparagraph{Increasing Noise Level.}
Our proposed method is robust even when the noise level is high. In Figure~\ref{fig:NoiseLevel}, we compare the prposed SC with SCE and ELR.  We increase the synthetic noise level on \JSRT dataset by increasing the step $T$ in our Markov model, while keep the same $\theta_1$ and $\theta_2$ (details and illustrations in the supplementary material). Results show our method is still robust even under extreme noise level, while the performance of GCE and ELR drops rapidly as the noise level increases.

\myparagraph{Decreasing Validation Size.}
In this experiment, we show that SC works well even with an extremely small clean validation set. Following setting $S_M$ of \JSRT dataset, we shrink the validation set from $24$ to $18$, $12$, $6$, $1$, respectively. We compare the performance with SCE and ELR.
Results in Figure~\ref{fig:ValSize} show that our proposed method still works well when the validation size is extremely small, even if only one clean sample is provided. 

\begin{figure}[tb!]
\vspace{-.1in}
\centering
  \begin{subfigure}{0.35\textwidth}
  \centering
  \includegraphics[width=\linewidth]{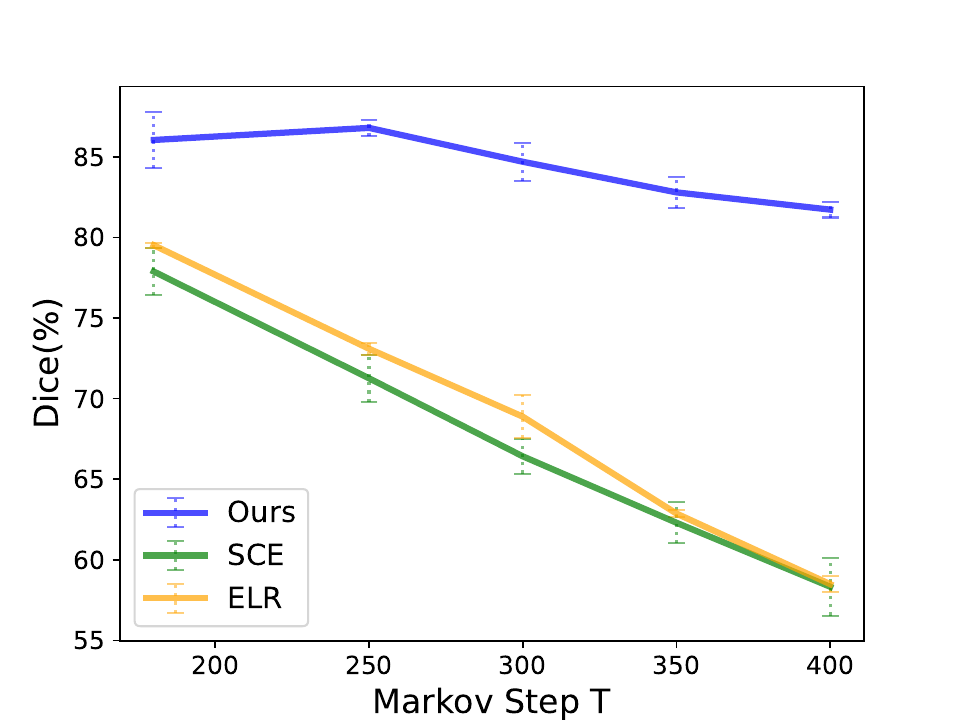}
  \vspace{-0.2in}
  \caption{Increasing noise level.}
  \label{fig:NoiseLevel}
  \end{subfigure}
  \hspace{.4cm}
  \begin{subfigure}{0.35\textwidth}
  \centering
  \includegraphics[width=\linewidth]{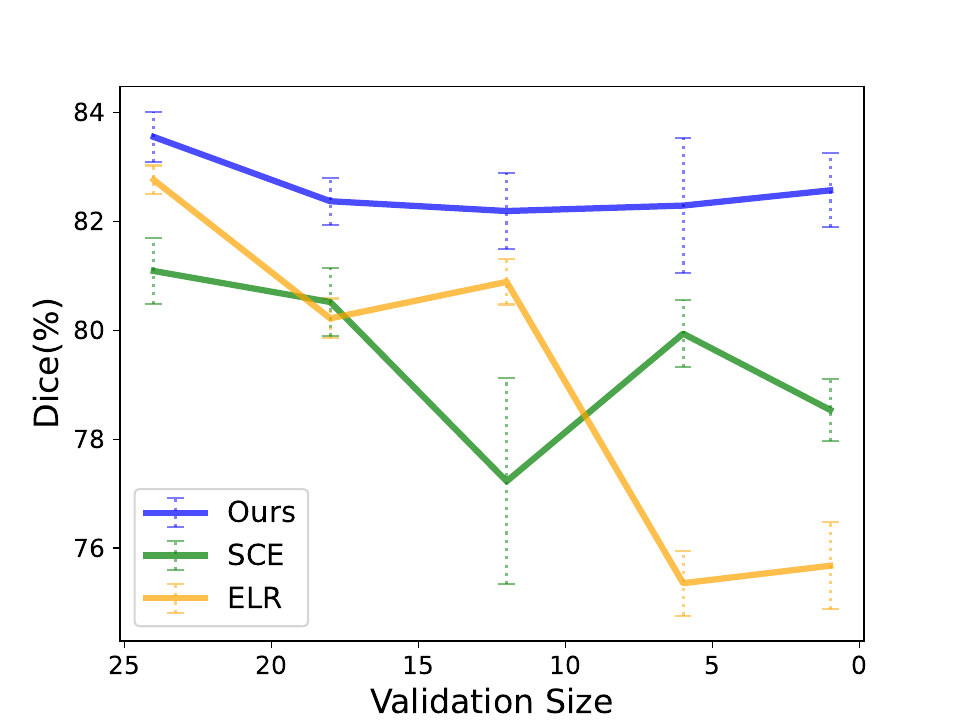}
  \vspace{-0.2in}
  \caption{Decreasing validation size.}
  \label{fig:ValSize}
  \end{subfigure}
  \vspace{-0.1in}
  \caption{Ablation study.}
  \label{fig:Ablation}
  \vspace{-0.2in}
\end{figure}

\subsubsection*{Conclusion}
In this paper, we proposed a Markov process to model segmentation label noise. Targeting such label noise model, we proposed a label correction method to recover true labels progressively. We provide theoretical guarantees of the correctness of a conceptual algorithm and relax it into a more practical algorithm, called \textbf{SC}. Our experiments show significant improvements over existing approaches on both synthetic and real-world label noise.

\subsubsection*{Acknowledgments}
The authors acknowledge the National Cancer Institute and the Foundation for the National Institutes of Health, and their critical role in the creation of the free publicly available LIDC/IDRI Database used in this study.

This research of Jiachen Yao and Chao Chen was partly supported by NSF CCF-2144901. The reported research of Prateek Prasanna was partly supported by NIH 1R21CA258493-01A1. The content is solely the responsibility of the authors and does not necessarily represent the official views of the National Institutes of Health. Mayank Goswami would like to acknowledge support from US National Science Foundation (NSF) grant CCF-1910873.

\bibliographystyle{iclr2023_conference}

\newpage
\appendix
\section{Appendix}
\label{sec:Appendix}
\setcounter{lemma}{0}
\setcounter{theorem}{0}
\numberwithin{equation}{section}
\setcounter{equation}{0}
\numberwithin{figure}{section}
\setcounter{figure}{0}
\numberwithin{table}{section}
\setcounter{table}{0}
\subsection{Proofs}

\begin{lemma}
$\phi$ is the signed distance function defined for domain $[\E\mY]_{\ge0.5}$. If $\theta_3\ll0.5$, then 
\begin{equation}
\label{eq:phi}
    \tilde{\phi} =
    \begin{cases}
    \phi - [\theta_1\theta_2]_{\ge0.5}-1, &\textit{if } \theta_1 \ge 0.5, \\
    \phi + [1+\theta_1\theta_2-\theta_2]_{< 0.5 }+1 & \textit{otherwise}.
    \end{cases}
\end{equation}
Here $\tilde{\phi}$ is the signed distance function of $[\E\tilde{\mY}]_{\ge0.5}$ when $T=1$.
\end{lemma}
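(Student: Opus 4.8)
The plan is to push the identity of \Eqref{eq:expect} down to individual pixels, threshold it to read off the noisy Bayes mask, recognize that mask as the clean one modified by a single one-pixel boundary layer, and convert that modification into an additive shift of the signed distance function. Concretely, specialize \Eqref{eq:expect} to $T=1$; treating the clean mask as deterministic given $\mX$ (so $[\E\mY]_{\ge0.5}=\mY$ and $\partial\mB^{(0)}=\partial\mB$, $\partial\mF^{(0)}=\partial\mF$ are well-defined index sets), the identity holds coordinatewise,
\[
\E[\tilde Y_s]=Y_s+\theta_1\theta_2\,[\partial\mB]_s+(\theta_1-1)\theta_2\,[\partial\mF]_s+\theta_3\,\E\big[[\textbf{Sign}]_s\big].
\]
Since a single Markov step can flip only the one-pixel-thick layers $\partial\mB$ and $\partial\mF$, the index set $I$ partitions into four types: (i) foreground pixels not in $\partial\mF$; (ii) $\partial\mF$; (iii) $\partial\mB$; (iv) background pixels not in $\partial\mB$. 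Reading off $\E[[\textbf{Sign}]_s]$ on each type directly from \Defref{def:mutli-step} — for instance on $\partial\mB$ one has $[\textbf{Sign}]_s=[\mB^{(1)}]_s=1-\tilde Y^{(1)}_s$ with $\E[\tilde Y^{(1)}_s]=\theta_1\theta_2$ — gives $\E[\tilde Y_s]$ equal to $1-\theta_3$, $\big(1-(1-\theta_1)\theta_2\big)(1-\theta_3)$, $\theta_1\theta_2+\theta_3(1-\theta_1\theta_2)$, and $\theta_3$ on types (i)--(iv) respectively.

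Next, apply $\tilde c(\mX)=[\E\tilde\mY]_{\ge0.5}$ and use $\theta_3\ll 0.5$. Types (i) and (iv) keep their clean labels. On $\partial\mB$ the pixel flips to foreground iff $\theta_1\theta_2\ge\frac{0.5-\theta_3}{1-\theta_3}$, i.e.\ (for small $\theta_3$) iff $[\theta_1\theta_2]_{\ge0.5}=1$; on $\partial\mF$ the pixel flips to background iff $1-(1-\theta_1)\theta_2<\frac{0.5}{1-\theta_3}$, i.e.\ (for small $\theta_3$) iff $[1+\theta_1\theta_2-\theta_2]_{<0.5}=1$. Splitting on the sign of $\theta_1-\tfrac12$: if $\theta_1\ge\tfrac12$ then $(1-\theta_1)\theta_2\le\tfrac12$, so $\partial\mF$ never flips and $\tilde c(\mX)$ is $c(\mX)$ with the layer $\partial\mB$ adjoined to the foreground exactly when $[\theta_1\theta_2]_{\ge0.5}=1$; if $\theta_1<\tfrac12$ then $\theta_1\theta_2<\tfrac12$, so $\partial\mB$ never flips and $\tilde c(\mX)$ is $c(\mX)$ with the layer $\partial\mF$ deleted from the foreground exactly when $[1+\theta_1\theta_2-\theta_2]_{<0.5}=1$.

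It remains to show that enlarging, resp.\ shrinking, the foreground by exactly its one-pixel boundary layer shifts the signed distance function by an additive constant, and to pin that constant down to $-[\theta_1\theta_2]_{\ge0.5}-1$, resp.\ $+[1+\theta_1\theta_2-\theta_2]_{<0.5}+1$, by carefully tracking how the ``$+1$'' offset on each side of the interface behaves when the boundary layer migrates across it; combined with the previous paragraph this is precisely \Eqref{eq:phi}. I expect this last step to be the main obstacle: one has to check the shift is uniform over the whole grid (corners, thin necks, multiply-connected foreground) and argue that the $O(|\partial\Omega|)$ pixels that end up sitting on the relocated interface, where the increment is a different integer, are too few to perturb the aggregate $\Delta=\frac1{|I|}\sum_{s\in I}(\tilde\phi_s-\phi_s)$ on which the correction algorithm relies. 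The assumption $\theta_3\ll0.5$ is needed only for the thresholding step: since $|\E[[\textbf{Sign}]_s]|\le1$, the term $\theta_3\,\E[[\textbf{Sign}]_s]$ has absolute value at most $\theta_3$ and hence cannot carry any of the relevant quantities across the threshold $0.5$, away from the measure-zero parameter loci $\theta_1\theta_2=\tfrac12$ and $(1-\theta_1)\theta_2=\tfrac12$.
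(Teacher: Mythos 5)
Your proposal follows essentially the same route as the paper's proof: specialize the expectation identity to $T=1$, evaluate $\E[\tilde Y_s]$ on the four pixel types, threshold at $0.5$ using $\theta_3\ll0.5$, and then use the observations that $\theta_1\theta_2\ge0.5$ forces $\theta_1\ge0.5$, that $1+\theta_1\theta_2-\theta_2<0.5$ forces $\theta_1<0.5$, and that the two flips are mutually exclusive. You are in fact more careful than the paper on one point: the paper asserts that $\textbf{Sign}_s=0$ for $s\in\partial\mF\cup\partial\mB$ and drops the $\theta_3$ term there outright, whereas under \Defref{def:mutli-step} one has $\textbf{Sign}_s=B^{(1)}_s$ on $\partial\mB$, giving your exact value $\theta_1\theta_2+\theta_3(1-\theta_1\theta_2)$; your remark that $|\theta_3\,\E[\textbf{Sign}_s]|\le\theta_3$ cannot carry anything across the $0.5$ threshold is the honest way to invoke $\theta_3\ll0.5$. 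The one step you leave open --- that adjoining or deleting the one-pixel boundary layer shifts $\phi$ by the stated additive constant --- is exactly the step the paper dispatches in a single sentence (``$\partial\mB$ becomes $\partial\tilde{\mF}$, leading to $\tilde{\phi}=\phi-[\theta_1\theta_2]_{\ge0.5}-1$'') with no further justification, so you have not missed an argument that the paper actually supplies. Your worry there is legitimate: with the paper's definition of $d$, the increment $\tilde{\phi}_s-\phi_s$ under an expansion flip is $-2$ on the flipped layer but $-1$ elsewhere (generically), and is $0$ everywhere when no flip occurs, so the displayed formula cannot be read as a pointwise identity over all of $I$ without qualification; the paper's proof does not address this, nor the degenerate configurations (merging components, thin necks) you raise. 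To close the gap, either state explicitly on which pixels the identity is claimed, or pass directly to the averaged bias $\Delta$ as you suggest, where the $O(|\partial\mB|)$ exceptional pixels are negligible.
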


\begin{proof}
When $T=1$, the given $\mY$ is deterministic, we have
\begin{equation}
    \E[\tilde{\mY}] = \mY + \theta_1\theta_2\E[\partial\mB] + (\theta_1-1)\theta_2\E[\partial\mF] + \theta_3\E[\textbf{Sign}].
\end{equation}
Recall that $\textbf{Sign}=\mB^{(T)}\odot\mB - \mF^{(T)}\odot\mF$, indicating $\text{Sign}_s = 0$ if $s\in\partial\mF\cup\partial\mB$. Therefore,
\begin{equation}
    \E[\tilde{Y_s}] = 
    \begin{cases}
        \theta_1\theta_2, & s\in\partial\mB \\
        1+\theta_1\theta_2-\theta_2, &s\in\partial\mF \\
        Y_s \pm \theta_3, & \textit{otherwise}
    \end{cases}
\end{equation}
Since $Y_s$ is binary, and if $\theta_s\ll0.5$, $[Y_s\pm\theta_3]_{\ge0.5} = Y_s$. Therefore,
\begin{equation}
    \tilde{c}(X_s) = 
    \begin{cases}
    [\theta_1\theta_2]_{\ge0.5}, & s\in\partial\mB\\
    [1+\theta_1\theta_2-\theta_2]_{\ge0.5}, & s\in\partial\mF \\
    Y_s, & \textit{otherwise}.
    \end{cases}
\end{equation}

We claim the following three facts. 

(1) $\theta_1\theta_2\ge0.5$ only if $\theta\ge0.5$. \newline
This is true because if $\theta<0.5$, $\theta_1\theta_2<0.5$ since $\theta_2\le1$. 

(2) $1+\theta_1\theta_2-\theta_2<0.5$ only if $\theta<0.5$ \newline
If $\theta\ge0.5$, then $1+(\theta_1-1)\theta_2\ge 1-0.5\theta_2\ge0.5$. 

(3) $\theta_1\theta_2\ge0.5$ and $1+\theta_1\theta_2-\theta_2<0.5$ are mutual exclusive.\newline
First we notice when $\theta_1\theta_2\ge0.5$, $1+\theta_1\theta_2-\theta_2\ge1.5-\theta_2\ge0.5$. Then when $1+\theta_1\theta_2-\theta_2<0.5$, $\theta_1\theta_2<\theta_2-0.5<0.5$.

The first two facts separate the two cases in \Eqref{eq:phi}. And according to fact (3), either $\partial\mB$ is flipped into foreground or $\partial\mF$ is flipped into background. In the former case, expansion happens because $\theta_1>0.5$, and $\partial\mB$ becomes $\partial\tilde{\mF}$, leading to $\tilde{\phi} = \phi - [\theta_1\theta_2]_{\ge0.5}-1$. Opposite happens in the latter case. Therefore,
\begin{equation}
    \tilde{\phi} =
    \begin{cases}
    \phi - [\theta_1\theta_2]_{\ge0.5}-1, &\textit{if } \theta_1 \ge 0.5, \\
    \phi + [1+\theta_1\theta_2-\theta_2]_{< 0.5 }+1 & \textit{otherwise}.
    \end{cases}
\end{equation}
Proof done.
\end{proof}

\begin{theorem}
If $\exists \varepsilon_0, \varepsilon_1 > 0$, s.t.
\begin{equation}
\label{eq:modelerror}
    \E_{\mX}\left[\sup_{s}|\hat{\phi}(X_s)-\tilde{\phi}(X_s)| \right]\le \varepsilon_0,
\end{equation}
and 
\begin{equation}
\label{eq:superror}
    \sup_{\mX}\left[\sup_{s}|\hat{\phi}(X_s)-\tilde{\phi}(X_s)| \right]\le \varepsilon_1,
\end{equation}
hold for the learned classifier $\hat{c}(\mX)$, then $\forall \varepsilon > \varepsilon_0$ and for a fixed confidence level $0\le\alpha\le1$, with
\begin{equation}
    \label{eq:valsize-ap}
    V\geq \frac{\varepsilon_1^2}{2(\varepsilon-\varepsilon_0)^2}\log \left(\frac{2|I|}{\alpha}\right)
\end{equation}
number of clean samples $\{\mathbf{x}_v,\mathbf{y}_v\}_{v=1}^{V}$,
 $\phi'$ can be recovered within $\varepsilon+\varepsilon_0$ error with probability at least $1-\alpha$, i.e.~$P(\E_{\mX}\left[ \sup_s|\phi'(X_s)-\phi(X_s)|\right]\le\varepsilon+\varepsilon_0)\ge 1-\alpha$.
\end{theorem}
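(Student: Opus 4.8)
The plan is to reduce Theorem~\ref{validation-size} to a scalar concentration inequality for the bias estimate $\hat{\Delta}$. The starting point is Lemma~\ref{sdf-representation}: for $T=1$ it asserts the function identity $\tilde{\phi}=\phi+\Delta$, i.e.\ the per-pixel bias $\tilde{\phi}(X_s)-\phi(X_s)$ equals one and the same constant $\Delta$ for \emph{every} pixel $s$ and \emph{every} image $\mX$ (and this constant is exactly $\frac{1}{|I|}\sum_{s\in I}(\tilde{\phi}_s-\phi_s)$). First I would record the idealized recovery: if $\Delta$ were known exactly, then $\hat{\phi}-\Delta$ would differ from $\phi$ only by the model error $\hat{\phi}-\tilde{\phi}$, whose expected supremum over pixels is at most $\varepsilon_0$ by hypothesis. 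So the only quantity left to control is how accurately the scalar $\Delta$ is estimated from the validation set.

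Next I would analyze the estimator $\hat{\Delta}$ of \eqref{eq:bias}. On each clean validation image $\mathbf{x}_v$ (whose annotation I take to be the Bayes label, so that its signed distance function is $\phi^v$), Lemma~\ref{sdf-representation} gives $\hat{\phi}^v_s-\phi^v_s=(\hat{\phi}^v_s-\tilde{\phi}^v_s)+\Delta$; averaging this over $s\in I$ and $v=1,\dots,V$ yields $\hat{\Delta}-\Delta=\frac{1}{V}\sum_{v=1}^V Z_v$, where $Z_v=\frac{1}{|I|}\sum_{s\in I}(\hat{\phi}^v_s-\tilde{\phi}^v_s)$. Because the validation images are i.i.d., the $Z_v$ are i.i.d.; by the supremum-error hypothesis each $Z_v$ lies in $[-\varepsilon_1,\varepsilon_1]$ almost surely, and by the mean-error hypothesis together with Jensen's inequality $|\E[Z_1]|\le\varepsilon_0$. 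The concentration step is then Hoeffding's inequality — applied pixelwise to $W_v^s=\hat{\phi}^v_s-\tilde{\phi}^v_s$ and then averaged over the $|I|$ pixel indices, the union bound over pixels being what introduces the $\log(2|I|/\alpha)$ factor — which shows that $\big|\frac{1}{V}\sum_v Z_v-\E[Z_1]\big|\le\varepsilon-\varepsilon_0$ with probability at least $1-\alpha$ as soon as $V$ is as large as in \eqref{eq:valsize}. Combining with $|\E[Z_1]|\le\varepsilon_0$ gives $|\hat{\Delta}-\Delta|\le\varepsilon$ on this event.

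Finally I would propagate this back to the label-correction error. For a fresh image $\mX$ and any pixel $s$, since $\phi'=\hat{\phi}-\hat{\Delta}$, Lemma~\ref{sdf-representation} gives $\phi'(X_s)-\phi(X_s)=(\hat{\phi}(X_s)-\tilde{\phi}(X_s))+(\Delta-\hat{\Delta})$. Taking the supremum over $s$, then the expectation over $\mX$ conditionally on the validation set (hence on $\hat{\Delta}$), and using the mean-error hypothesis once more, one gets $\E_{\mX}\big[\sup_s|\phi'(X_s)-\phi(X_s)|\big]\le\varepsilon_0+|\hat{\Delta}-\Delta|$, which on the high-probability event of the previous paragraph is at most $\varepsilon_0+\varepsilon$. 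This is exactly the asserted bound at the asserted confidence level.

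The part I expect to be delicate is the middle step, not the concentration inequality itself. One has to use the full force of Lemma~\ref{sdf-representation} — that the bias is a genuine constant, not merely that its pixel average equals $\Delta$ — so that the same scalar shift can be subtracted at every pixel; and one has to keep honest track of the fact that $\hat{\Delta}$ is itself a biased estimator, its mean differing from $\Delta$ by $\E[Z_1]$, which the hypotheses control only to within $\varepsilon_0$. This is precisely why the final guarantee is $\varepsilon+\varepsilon_0$ rather than $\varepsilon$, and why the statement must restrict to $\varepsilon>\varepsilon_0$. Everything else — the i.i.d.\ structure of the validation sample, the almost-sure bound $\varepsilon_1$, and the Hoeffding/union-bound arithmetic leading to \eqref{eq:valsize} — is routine.
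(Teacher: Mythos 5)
Your proposal follows essentially the same route as the paper: reduce everything to a concentration bound on the scalar $\hat{\Delta}$, apply Hoeffding pixelwise to $\hat{\phi}^v_s-\tilde{\phi}^v_s$ with a union bound over the $|I|$ pixels (the source of the $\log(2|I|/\alpha)$ factor), and then propagate via the triangle inequality $\E_{\mX}\bigl[\sup_s|\phi'-\phi|\bigr]\le\varepsilon_0+|\hat{\Delta}-\Delta|$. The one place your execution diverges is that you run Hoeffding on the signed variables $W^s_v\in[-\varepsilon_1,\varepsilon_1]$, whose range $2\varepsilon_1$ yields the stated guarantee only for $V$ four times larger than in \eqref{eq:valsize-ap}; the paper instead applies Hoeffding to the absolute errors $|\hat{\phi}^v_s-\tilde{\phi}^v_s|\in[0,\varepsilon_1]$ (whose mean is at most $\varepsilon_0$ by hypothesis) and bounds $|\hat{\Delta}-\Delta|$ by their average, which recovers the exact constant.
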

\begin{proof}
To bound the label correction error, we first aim to bound the bias estimation error. We try to prove that
\begin{equation}
\label{eq:biaserror}
    P(|\hat{\Delta}-\Delta|\ge\varepsilon) \le \alpha.
\end{equation}
Recall that the definition of $\hat{\Delta}$ is
\begin{equation}
\label{eq:biasEst}
    \hat{\Delta}=\frac{1}{V}\sum_{v=1}^V\frac{1}{|I|}\sum_{s\in I}(\hat{\phi}^v_s - \phi^v_s),
\end{equation}
and $\Delta$ is the true bias that is the same for every image $\mX$. Hence,
\begin{equation}
    \label{eq:biasTru}
    \Delta=\frac{1}{V}\sum_{v=1}^V\frac{1}{|I|}\sum_{s\in I}(\tilde{\phi}^v_s - \phi^v_s).
\end{equation}
Take \Eqref{eq:biasEst} and \Eqref{eq:biasTru} into the LHS of Inequality~\ref{eq:biaserror},
\begin{equation}
    \begin{aligned}
    P(|\hat{\Delta}-\Delta|\ge\varepsilon) & = P\left(\left|\frac{1}{V}\sum_v\frac{1}{|I|}\sum_s(\hat{\phi}^v_s-\tilde{\phi}^v_s) \right|\ge\varepsilon \right) \\
    & \le P\left(\frac{1}{V}\sum_v\frac{1}{|I|}\sum_s|\hat{\phi}^v_s-\tilde{\phi}^v_s| \ge\varepsilon \right).
    \end{aligned}
\end{equation}
Therefore, to prove \ref{eq:biaserror} is equivalent to proving
\begin{equation}
\label{eq:target}
    P\left(\frac{1}{V}\sum_v\frac{1}{|I|}\sum_s|\hat{\phi}^v_s-\tilde{\phi}^v_s| \ge\varepsilon \right) \le \alpha.
\end{equation}
Given an index $s$, the absolute error $|\hat{\phi}^v_s-\tilde{\phi}^v_s|$ is $i.i.d.$ over different images $\mX_v$. And the error is bounded by the image size,~i.e.~$0\le|\hat{\phi}^v_s-\tilde{\phi}^v_s|\le \varepsilon_1$. According to \textit{Hoeffding's inequality}, a lower bound is provided for the following probability,
\begin{equation}
    \label{eq:hoffding}
    P\left(\left|\frac{1}{V}\sum_v|\hat{\phi}^v_s-\tilde{\phi}^v_s| - \E_{\mX}\left[|\hat{\phi}^v_s-\tilde{\phi}^v_s|\right] \right| \ge\varepsilon\right) \le 2\exp\left[-\frac{2V\varepsilon^2}{\varepsilon_1^2}\right].
\end{equation}
By the given model error \ref{eq:modelerror}, $\forall s\in I$,
\begin{equation}
\label{eq:errorbound}
    \E_{\mX}\left[|\hat{\phi}^v_s-\tilde{\phi}^v_s|\right]\le \E_{\mX}\left[\sup_s|\hat{\phi}^v_s-\tilde{\phi}^v_s|\right]\le\varepsilon_0.
\end{equation}
Combine Inequality~\ref{eq:hoffding} and~\ref{eq:errorbound}, 
\begin{equation}
\label{eq:pixelwise}
    P\left(\frac{1}{V}\sum_v|\hat{\phi}^v_s-\tilde{\phi}^v_s| \ge\varepsilon\right) \le 2\exp\left[-\frac{2V(\varepsilon-\varepsilon_0)^2}{\varepsilon_1^2}\right].
\end{equation}
Observing the LHS of~\ref{eq:target} and~\ref{eq:pixelwise}, the difference is that random variable in~\ref{eq:target} is the average error among a specific group of indices, while~\ref{eq:pixelwise} holds for arbitrary index. If we iterate~\ref{eq:pixelwise} over the index set, then for a group of indices,~\ref{eq:target} naturally holds. In other words,
\begin{equation}
    \left(\frac{1}{V}\sum_v\frac{1}{|I|}\sum_s|\hat{\phi}^v_s-\tilde{\phi}^v_s| \ge\varepsilon\right)\subseteq \bigcup_{s\in I}\left(\frac{1}{V}\sum_v|\hat{\phi}^v_s-\tilde{\phi}^v_s| \ge\varepsilon\right).
\end{equation}
Hence,
\begin{equation}
    \begin{aligned}
    P\left(\frac{1}{V}\sum_v\frac{1}{|I|}\sum_s|\hat{\phi}^v_s-\tilde{\phi}^v_s| \ge\varepsilon\right)&\le \sum_sP\left(\frac{1}{V}\sum_v|\hat{\phi}^v_s-\tilde{\phi}^v_s| \ge\varepsilon\right) \\
    & \le 2|I|\exp\left[-\frac{2V(\varepsilon-\varepsilon_0)^2}{\varepsilon_1^2}\right],
    \end{aligned}
\end{equation}
To obtain~\ref{eq:target}, let
\begin{equation}
    2|I|\exp\left[-\frac{2V(\varepsilon-\varepsilon_0)^2}{\varepsilon_1^2}\right] \le \alpha,
\end{equation}
and if
\begin{equation}
    V \ge \frac{\varepsilon_1^2}{(\varepsilon-\varepsilon_0)}\log\frac{2|I|}{\alpha},
\end{equation}
then
\begin{equation}
\label{eq:conclusion}
    P(|\hat{\Delta}-\Delta|\ge\varepsilon) \le \alpha.
\end{equation}
Next we are about bounding the label correction error. Note that $\phi = \tilde{\phi} - \Delta$, and our label correction step indicates that $\phi' = \hat{\phi} - \hat{\Delta}$. Then we have, 
\begin{equation}
\begin{aligned}
\E_{\mX}\left[\sup_s|\phi'(X_s)-\phi(X_s)|\right] & = \E_{\mX}\left[\sup_s|(\hat{\phi}(X_s)-\tilde{\phi}(X_s)) + (\Delta-\hat{\Delta})|\right] \\
& \le \E_{\mX}\left[\sup_s|(\hat{\phi}(X_s)-\tilde{\phi}(X_s))|\right] + |\Delta-\hat{\Delta}| \\
& \le \varepsilon_0 + |\Delta-\hat{\Delta}|.
\end{aligned}
\end{equation}
Therefore,
\begin{equation}
    \begin{aligned}
    P(|\hat{\Delta}-\Delta|\ge\varepsilon) & = P(|\hat{\Delta}-\Delta|+\varepsilon_0\ge\varepsilon+\varepsilon_0) \\
    & \ge P\left(\E_{\mX}\left[\sup_s|\phi'(X_s)-\phi(X_s)|\right]\ge \varepsilon+\varepsilon_0\right).
    \end{aligned}
\end{equation}
According to~\ref{eq:conclusion}, we proved
\begin{equation}
    P\left(\E_{\mX}\left[\sup_s|\phi'(X_s)-\phi(X_s)|\right]\ge \varepsilon+\varepsilon_0\right)\le \alpha.
\end{equation}
\end{proof}

\subsection{Implementation Details}
 All the experiments are done on one NVIDIA GTX 1080 GPU (12G Memory) with a batch size of 2. For JSRT, we use a learning rate of $0.05$. At 750 iter, we decrease the learning rate by a factor of 10, with total 1.5K iterations. For ISIC 2017 and Cityscapes, the learning rate is also set to be 0.05, and decay is done every 4K iterations for ISIC 2017 and 3.1K iterations for Cityscapes, by a factor of 2. The model converges at around 10K iterations.
\subsubsection{Noise and Dataset Settings.}
\label{sec:settings}
\myparagraph{Synthetic Noise Setting.} The synthetic noise follows parameters in Table~\ref{tab:NoiseSetting}. $S_E$ stands for expansion setting and $S_S$ standing for shrinkage setting. And for the ablation study of noise level, we keep the same $\theta_1,\theta_2$ in $S_E$ of JSRT dataset, and increase $T$ from $(180,180,100)$ (for lung, heart, clavicle) to $(250,250,170)$, $(300,300,220)$, $(350,350,270)$, $(400,400,320)$, respectively. $\theta_3$ is set $0.1$ for all settings. To improve the smoothness, we also use a Gaussian filter before the random flipping. An example for increasing noise level of heart class is shown in Fig.~\ref{fig:NoiseLevelExp}.
\begin{table}[htb]
\centering
\caption{Label noise settings on two synthetic datasets. $M(T,\theta_1,\theta_2)$ stands for the proposed multi-step Markov process.}
\label{tab:NoiseSetting}
     \begin{adjustbox}{width=\textwidth,center}
        \begin{tabular}{cccccc}
            \hline
            Noise &  \multicolumn{3}{c}{JSRT} & ISIC & Brats\\
            \cdashline{2-4}
            Setting & Lung & Heart & Clavicle & 2017 & 2020 \\
            \hline
            $S_E$ & $M(180, 0.7, 0.03)$ & $M(180, 0.7, 0.03)$ & $M(100, 0.7, 0.03)$ & $M(200, 0.8, 0.05)$ & $M(80,0.7,0.05)$ \\
            $S_S$ & $M(200, 0.3, 0.05)$ & $M(200, 0.3, 0.05)$ & $M(120, 0.3, 0.05)$ & $M(200, 0.2, 0.05)$ & $M(80,0.3,0.05)$ \\
            \hline
        \end{tabular}
     \end{adjustbox}
\end{table}

\begin{figure}
\centering
\includegraphics[width=\linewidth]{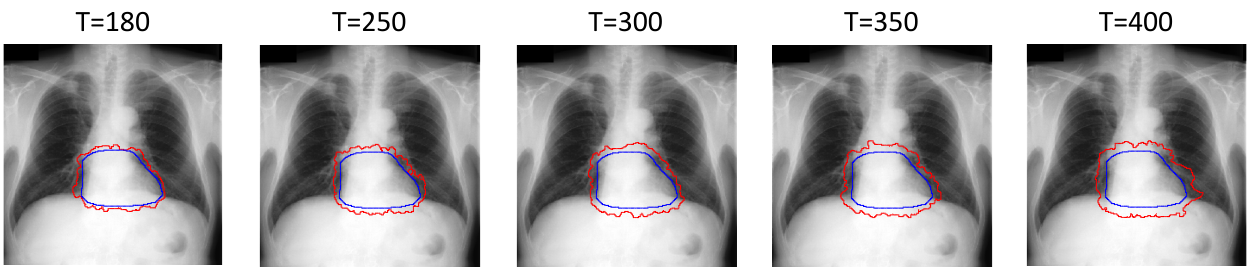}
\caption{The increasing noise level in `heart' class. Red line is noisy boundary and blue line is true boundary.}
\label{fig:NoiseLevelExp}
\end{figure}

\begin{figure}[t]
\centering
  \begin{subfigure}{0.22\textwidth}
  \centering
  \includegraphics[width=\linewidth]{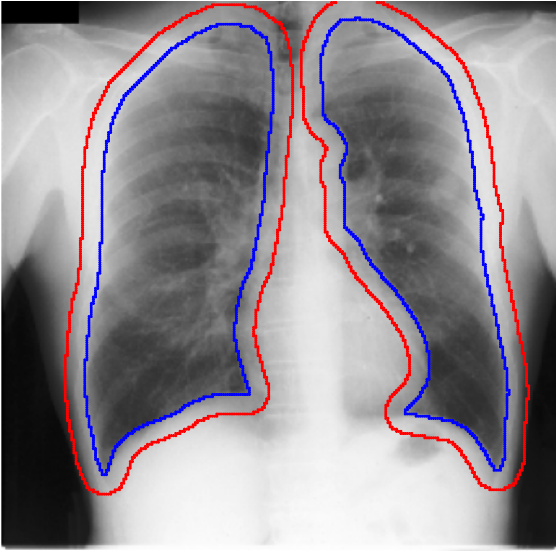}
  \caption{Equal dilation.}
  \end{subfigure}
  \hspace{.2cm}
  \begin{subfigure}{0.22\textwidth}
  \centering
  \includegraphics[width=\linewidth]{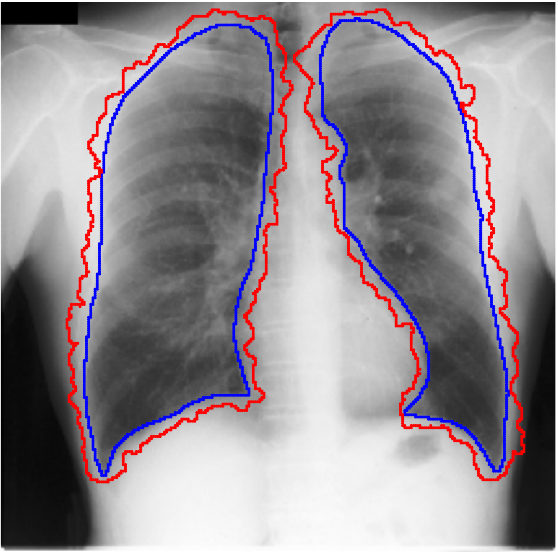}
  \caption{Our $S_E$ noise.}
  \end{subfigure}
  \hspace{.2cm}
 \begin{subfigure}{0.22\textwidth}
  \centering
  \includegraphics[width=\linewidth]{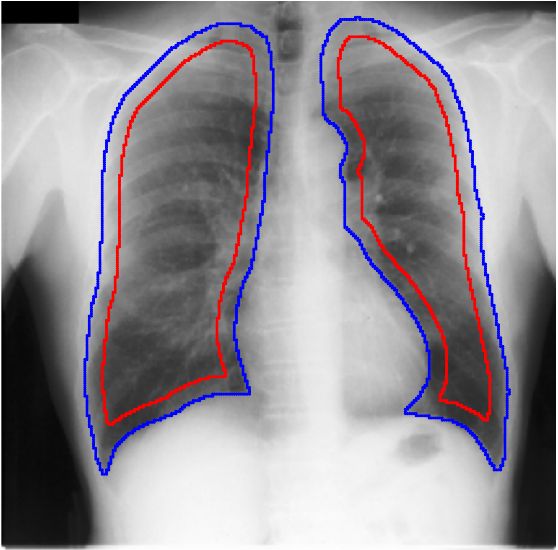}
  \caption{Equal erosion.}
  \end{subfigure}
  \hspace{.2cm}
 \begin{subfigure}{0.22\textwidth}
  \centering
  \includegraphics[width=\linewidth]{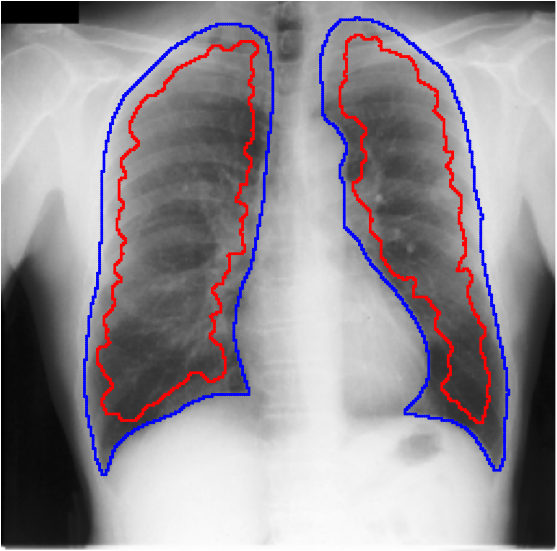}
  \caption{Our $S_S$ noise.}
  \label{fig:SC}
  \end{subfigure}
 \caption{Compare equal dilation (a) and equal erosion (c) with Markov noise (b) and (d) generated by our Markov model.}
  \label{fig:NoiseCompare}
\end{figure}

\myparagraph{Dataset Settings.} For \Brats dataset, we merge all three classes into a single class. The reason is that some classes are so few in volumes that most of them could vanish when we create the synthetic noisy labels. For \LIDC dataset, we follow the pre-processing of \citet{PUnet} by extracting 2D $128\times128$ slices centred around the annotated nodules, resulting in 8843 images in the training set, 1993 images in the validation set and 1980 images in the test set. We set $\gamma=1$ in our algorithm and it terminates after 1 iteration. For \textit{Citysacpes} dataset, since the test labels are not publicly available, we use the validation set as test set. We resize all images to $256\times 512$ and remove images whose coarse masks have few or no car labels. This gives us $2086$ training images and $350$ test images. We further randomly split the training images into $1986$ training images (with noisy label) and $100$ validation images (with clean label). We also use a simple U-Net as our backbone network. We conservatively choose a small $\gamma=0.4$, and correct label noise progressively. The algorithm terminates after 2 iterations.
\subsubsection{Baselines.}
\label{sec:baselines}
We compare the proposed SC with current SOTA methods from both classification context and segmentation context: (1)GCE~\cite{GCE} trains the deep neural networks with a generalized cross entropy loss to handle noisy labels. The hyperparameters $k,q$ in this work are set to be $0.5$ and $0.8$, respectively. (2)SCE~\cite{SCE} combines the cross entropy and reverse cross entropy (RCE) into a single noise robust loss. The hyperparameters $\alpha$ and $\beta$ are set to be $1.0$ and $0.5$ for JSRT and ISIC 2017 dataset, and $\alpha=1.0$, $\beta=0.5$ for Cityscapes dataset. (3)CT+~\cite{Coteachingplus} utilizes two networks and selects small-loss instances for cross training. To employ this method into segmentation, we treat each pixel as an instance. The prior estimated noise rate $\tau$ is estimated with the clean validation set. (4)ELR~\cite{ELR} utilizes the early stopping technique into a regularization term to prevent the network from memorizing noisy labels. The hyperparameters $\lambda$ and $\beta$ are set to be $7$ and $0.8$, respectively. (5)QAM~\cite{pickandlearn} re-weights samples with a quality awareness module (QAM), trained together with the segmentation model. The outputs of QAM are the weights for each image in the loss function. (6)CLE~\cite{CharacterizingLE} leverages confident learning to correct noisy labels based on the network prediction. Then train a robust network with corrected labels.

\subsection{Illustrative Experiments}
\label{sec:Illustration}
\myparagraph{Network predictions can be over confident if bias shows up in training labels.} Some works~\citep{CharacterizingLE, superpixel} choose to trust the network prediction probability to correct label noises,~i.e.~they believe predictions with small confidence is likely to be wrong while pixels with large confidence tend to be correctly predicted. However, the network can fit to noisy labels quickly and be overconfident when trained with biased noisy labels. This phenomenon is also observed by \citet{Generalization}. In Figure~\ref{fig:overconfident}(a), we train a network with dilated noises and show its prediction probability map,~i.e.~the output after \textit{sigmoid}. The red pixel is predicted as foreground with a high probability $0.9994874$, whereas it is actually in background. Therefore, methods based on trusting the network predictions cannot correct this label because the network is over confident. And since the network can fit to noise rapidly, early learning techniques also cannot eliminate this bias. Our method works because we do not trust the network prediction. Instead, we compare it to the clean label in validation set and estimate the bias. We then eliminate this bias in the training prediction. We also show how the prediction probability changes while training the model in Figure~\ref{fig:overconfident}(b). It shows that the model can be over confident quickly while training. So methods that employ early learning techniques \citep{ELR, Memorization, liu2021adaptive} are hard to work under biased noisy labels.

\begin{figure}[htbp!]
\centering
  \begin{subfigure}[b]{0.29\textwidth}
  \centering
  \includegraphics[width=\linewidth]{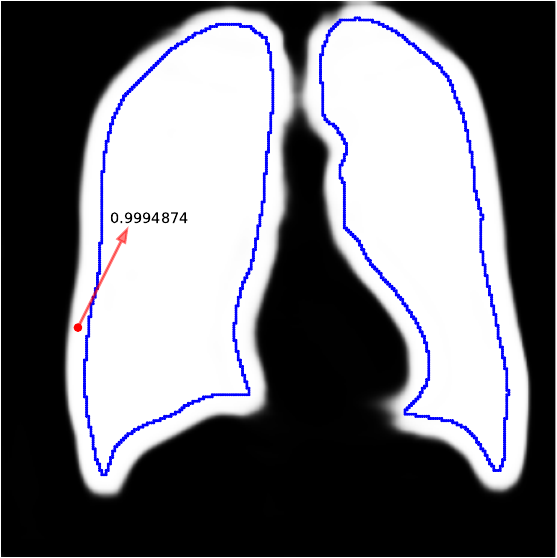}
  \caption{Prediction probability.}
  \end{subfigure}
  \hspace{.4cm}
  \begin{subfigure}[b]{0.35\textwidth}
  \centering
  \includegraphics[width=\linewidth]{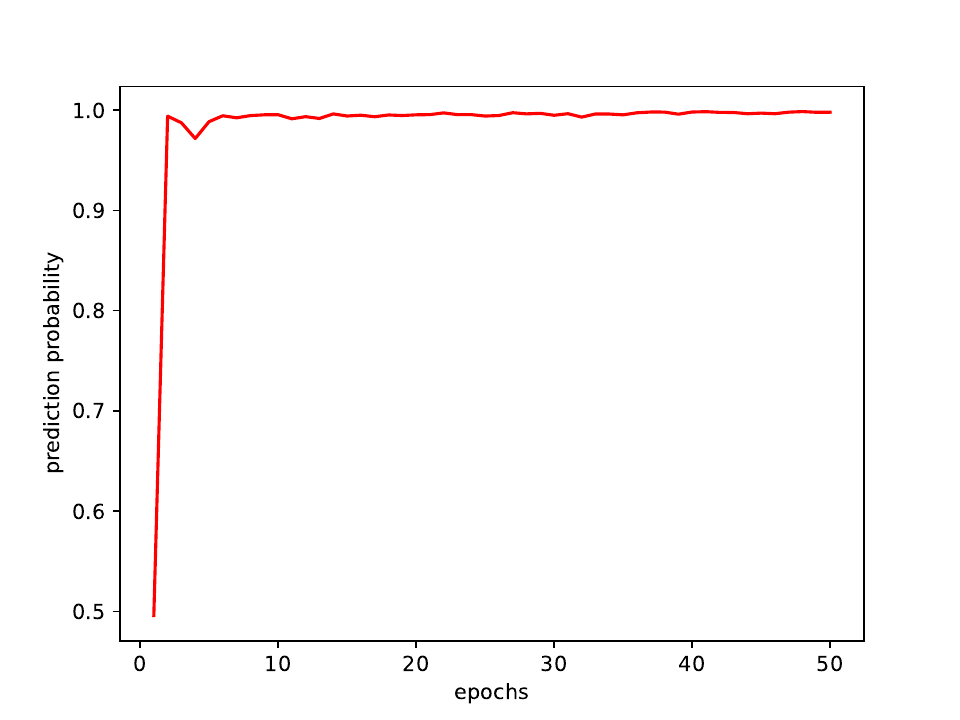}
  \caption{The probability of the red point while training.}
  \end{subfigure}
  \caption{A model trained with biased labels can be over confident. The model is trained with dilated noises. In the left figure, the blue line is the true segmentation boundary, and the red point is the selected pixel where it is supposed to be background (in the true label) but is predicted as foreground. The right figure shows how the network prediction on the red point fits to noise along training.}
  \label{fig:overconfident}
\end{figure}

\myparagraph{Noisy annotations with random inner holes.} Our Markov noise model assumes the spatial bias occurs around the true boundary, but it does not reject other random noises like inner holes. Figure~\ref{fig:innerhole} shows an expanded noise with low-frequency inner holes (1st row), and our method can correct this hole after 2 iterations.

\begin{figure}[htb!]

  \centering
  \includegraphics[width=\linewidth]{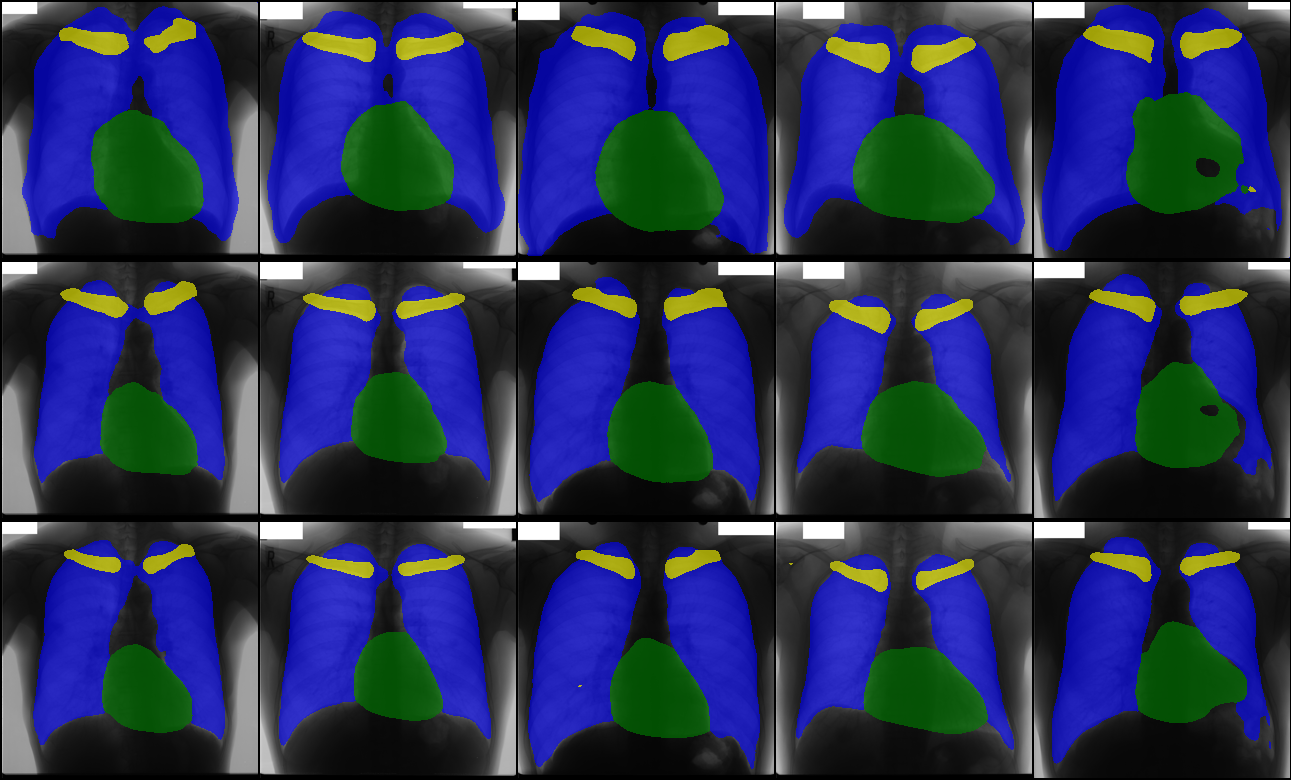}
  \caption{A model trained with noisy labels with random inner holes (first row). The corrected label by the proposed method after 1 iteration (second row) and 2 iterations (third row).}
  \label{fig:innerhole}
\end{figure}

\subsection{Qualitative Results}
\label{sec:Qualitative}

We provide qualitative results for prediction on test images in Fig.~\ref{fig:Qualitative results-1} and Fig.~\ref{fig:Qualitative results-2} as two sub parts. We also show the label correction on training images in Fig.~\ref{fig:CorrectionResults}.

\begin{figure}[t]
\centering
\includegraphics[width=\linewidth]{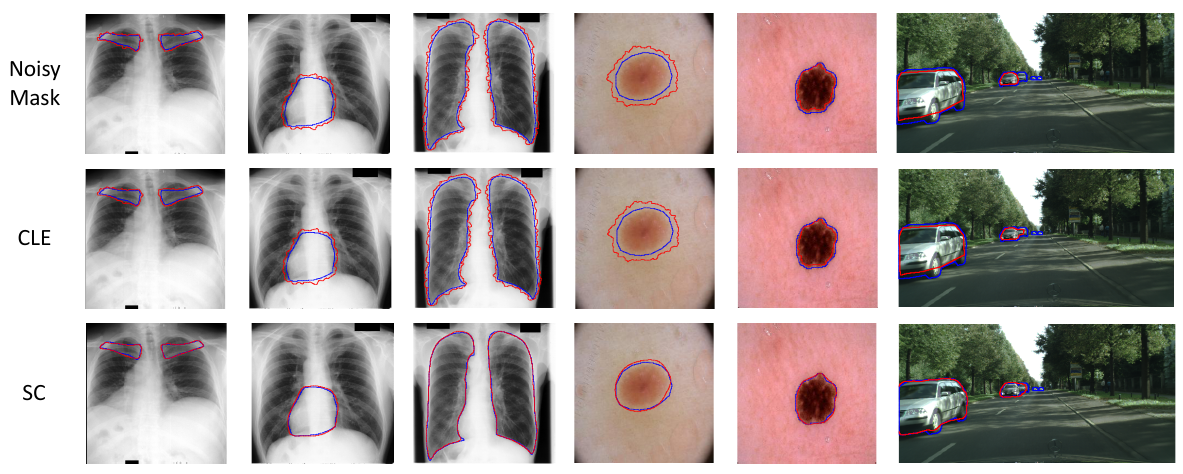}
\caption{Label correction results from the baseline CLE and our method SC. Red lines are corrected boundaries, and blue lines are true boundaries.}
\label{fig:CorrectionResults}
\end{figure}

\begin{figure}[t]
    \centering
    \begin{subfigure}{\textwidth}
        \centering
        \includegraphics[width=0.23\linewidth]{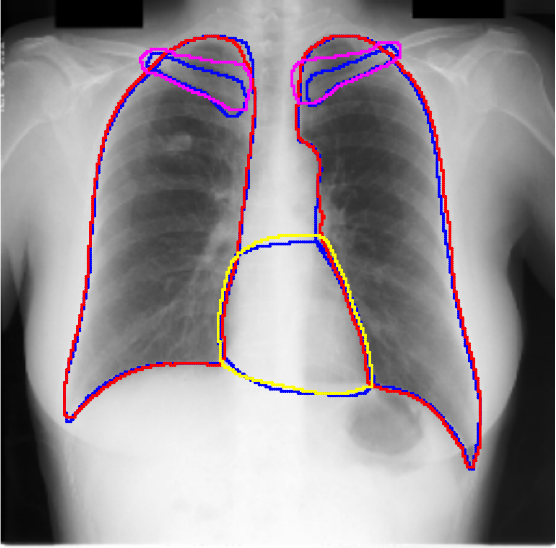}%
        \hfill
        \includegraphics[width=0.23\linewidth]{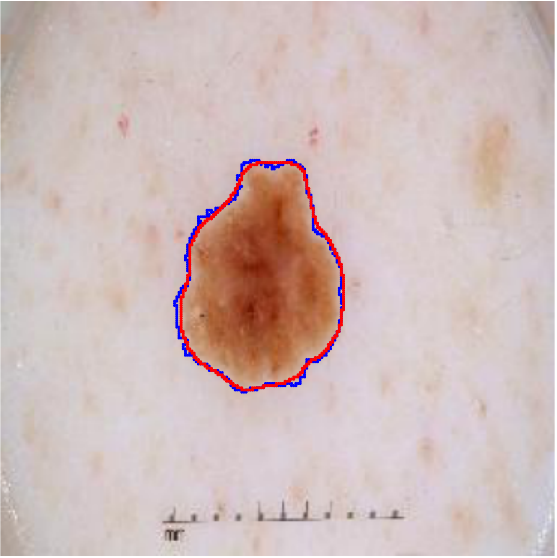}
        \hfill
        \includegraphics[width=0.46\linewidth]{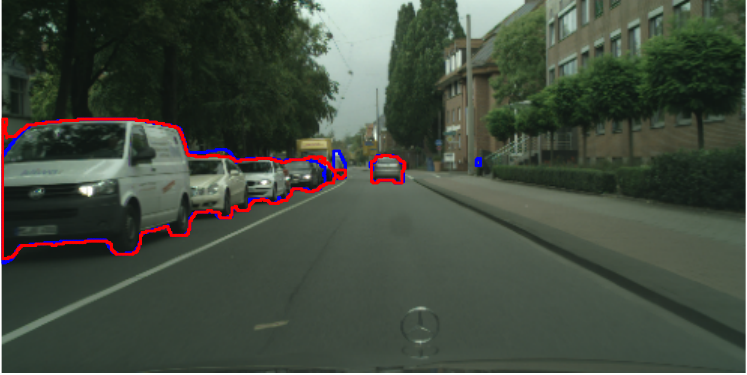}
        \caption{U-Net trained on clean labels.}
        \vspace{-.1in}
    \end{subfigure}
    \vskip\baselineskip
    \begin{subfigure}{\textwidth}
        \centering
        \includegraphics[width=0.23\linewidth]{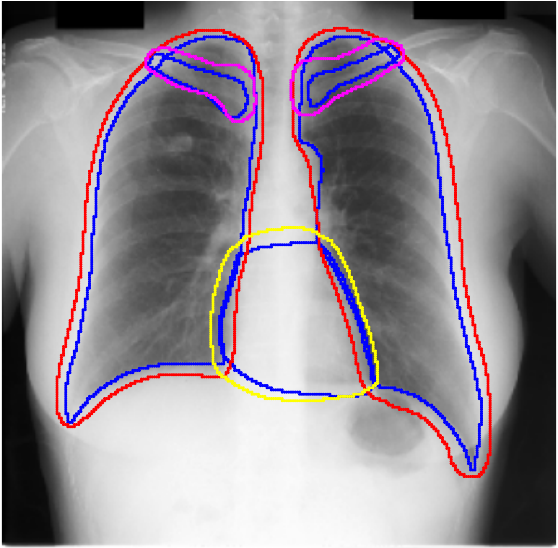}%
        \hfill
        \includegraphics[width=0.23\linewidth]{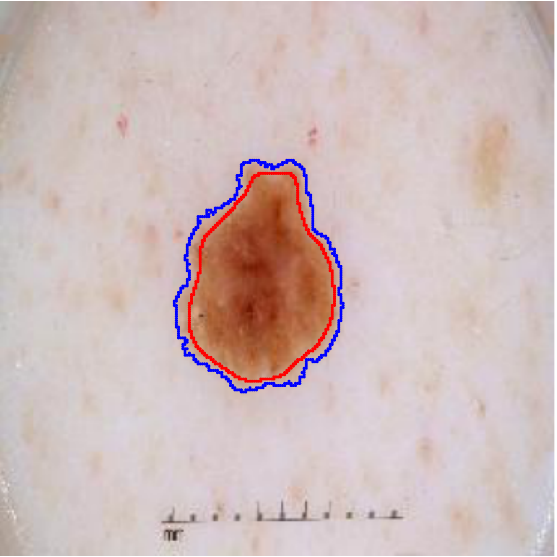}
        \hfill
        \includegraphics[width=0.46\linewidth]{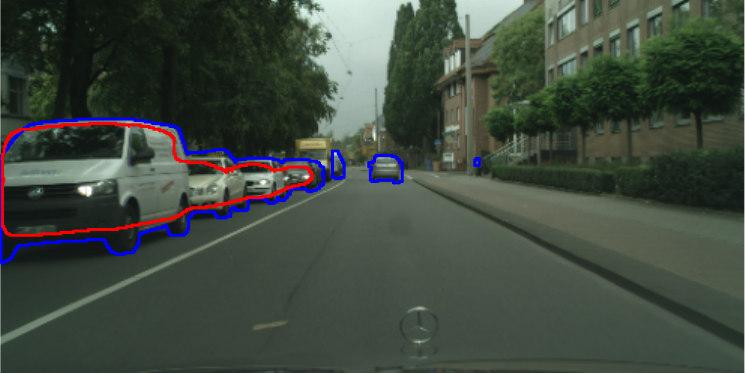}
        \caption{GCE}
        \vspace{-.1in}
    \end{subfigure}
    \vskip\baselineskip
    \begin{subfigure}{\textwidth}
        \centering
        \includegraphics[width=0.23\linewidth]{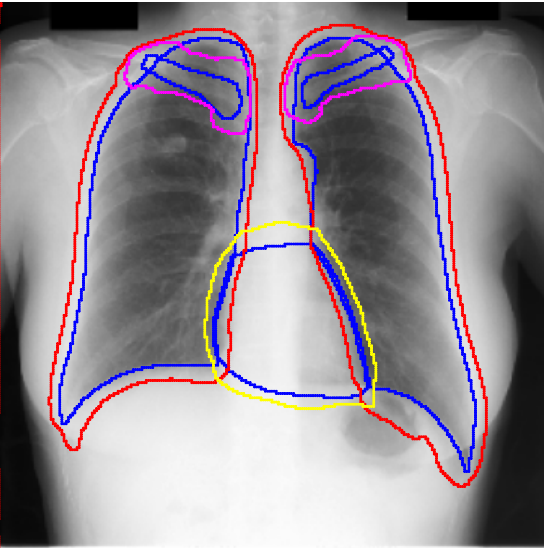}%
        \hfill
        \includegraphics[width=0.23\linewidth]{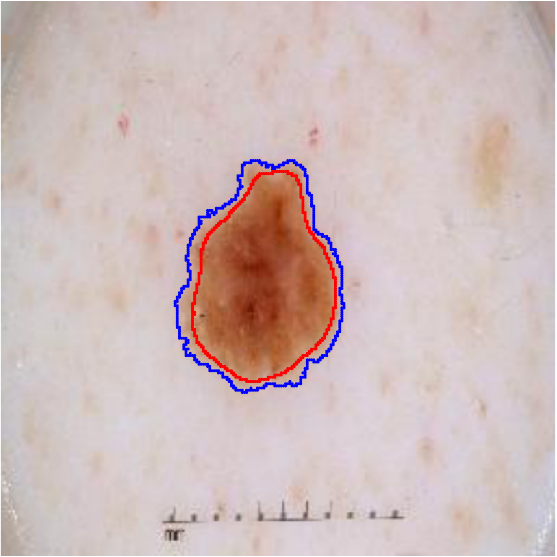}
        \hfill
        \includegraphics[width=0.46\linewidth]{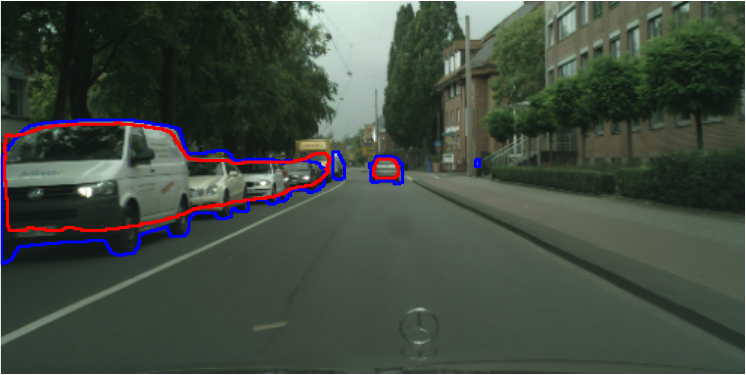}
        \caption{SCE}
        \vspace{-.1in}
    \end{subfigure}
    \vskip\baselineskip
    \begin{subfigure}{\textwidth}
        \centering
        \includegraphics[width=0.23\linewidth]{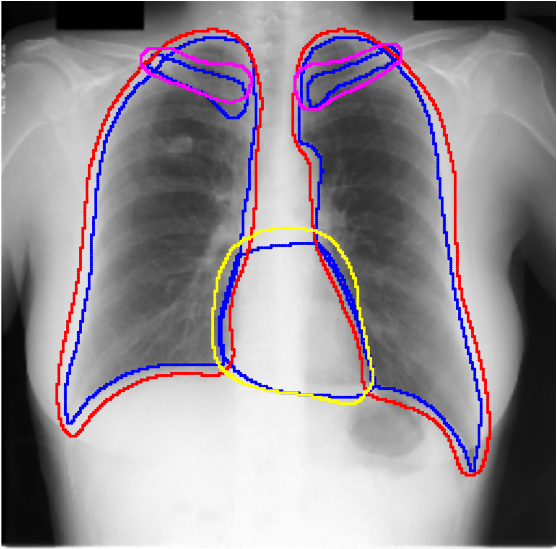}%
        \hfill
        \includegraphics[width=0.23\linewidth]{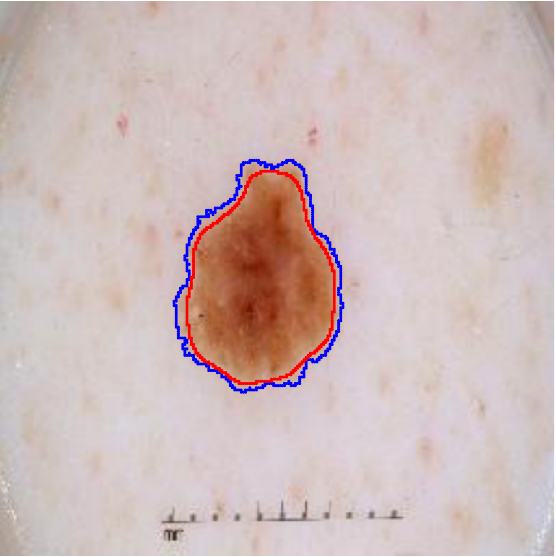}
        \hfill
        \includegraphics[width=0.46\linewidth]{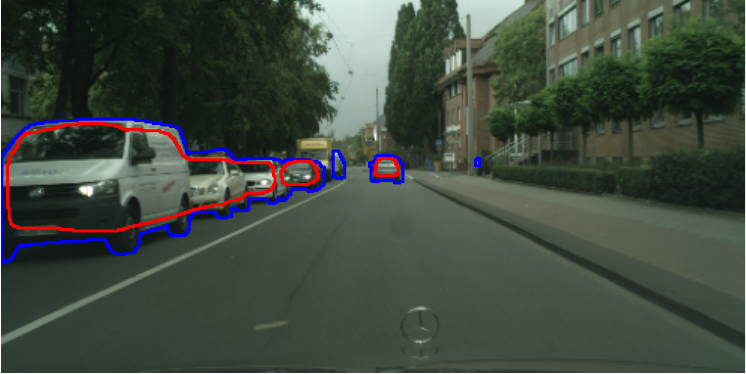}
        \caption{CT+}
        \vspace{-.1in}
    \end{subfigure}
    \caption{Qualitative results of different model predictions -- part one. Blue lines are true boundaries. All other colors are corresponding prediction boundaries.}
    \label{fig:Qualitative results-1}
\end{figure}
\begin{figure}[t]
    \centering
    \begin{subfigure}{\textwidth}
        \centering
        \includegraphics[width=0.23\linewidth]{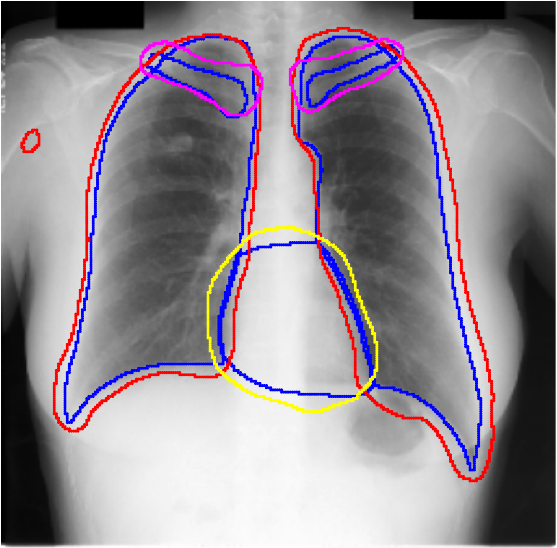}%
        \hfill
        \includegraphics[width=0.23\linewidth]{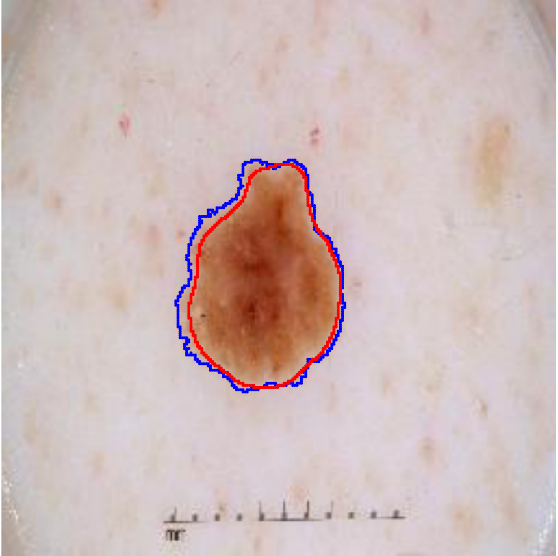}
        \hfill
        \includegraphics[width=0.46\linewidth]{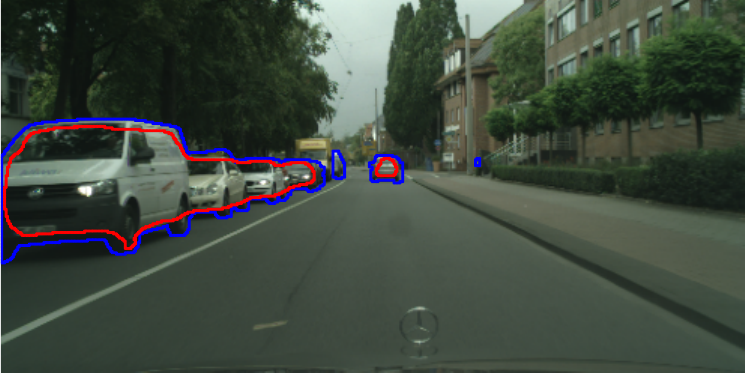}
        \caption{ELR}
        \vspace{-.1in}
    \end{subfigure}
    \vskip\baselineskip
    \begin{subfigure}{\textwidth}
        \centering
        \includegraphics[width=0.23\linewidth]{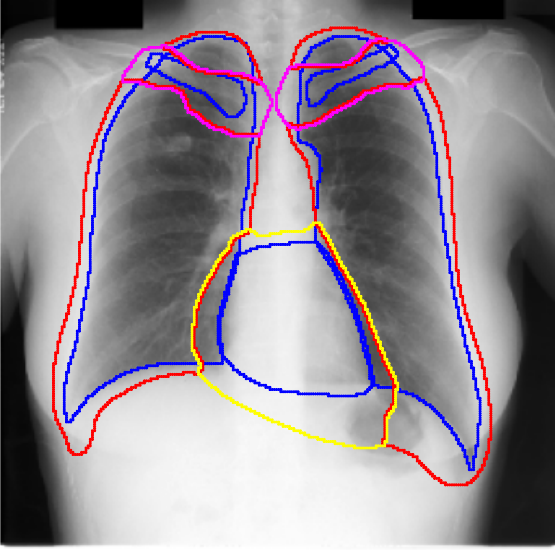}%
        \hfill
        \includegraphics[width=0.23\linewidth]{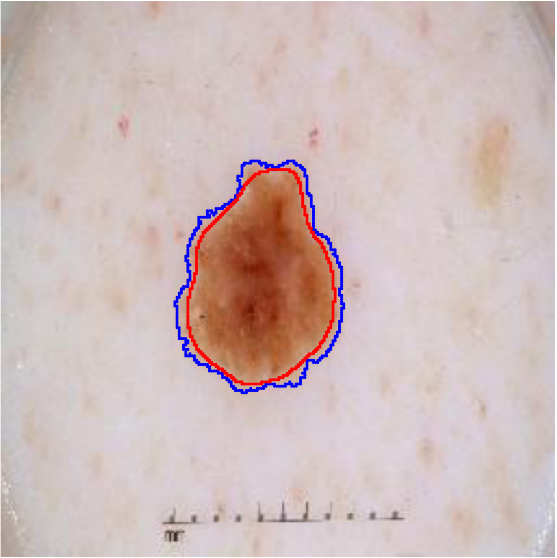}
        \hfill
        \includegraphics[width=0.46\linewidth]{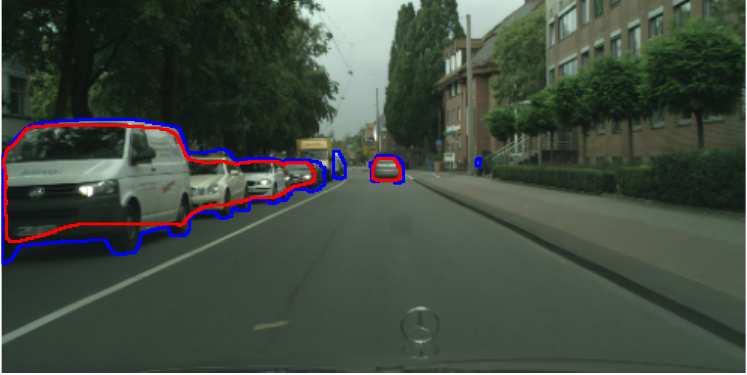}
        \caption{QAM}
        \vspace{-.1in}
    \end{subfigure}
    \vskip\baselineskip
    \begin{subfigure}{\textwidth}
        \centering
        \includegraphics[width=0.23\linewidth]{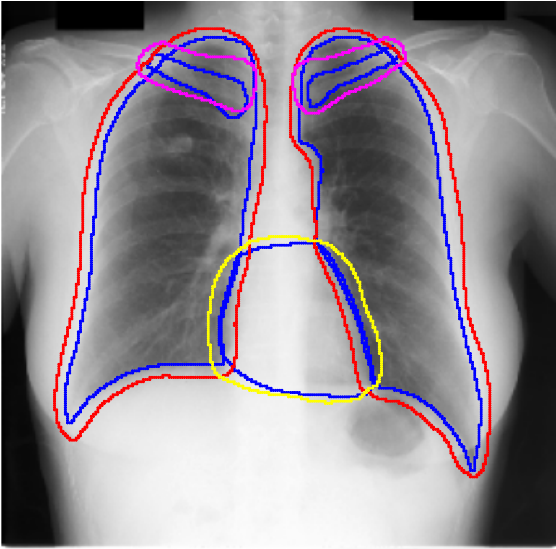}%
        \hfill
        \includegraphics[width=0.23\linewidth]{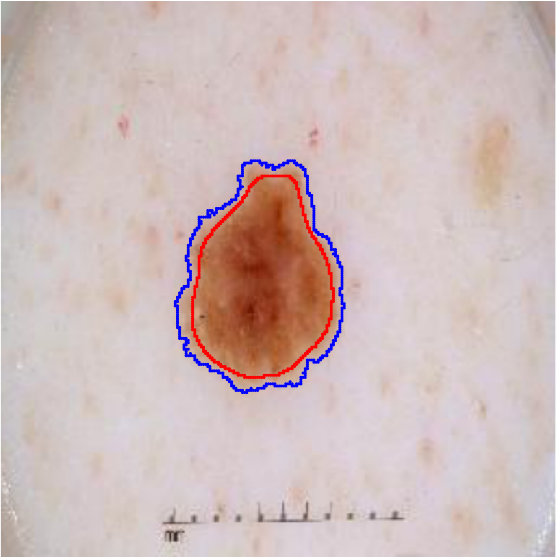}
        \hfill
        \includegraphics[width=0.46\linewidth]{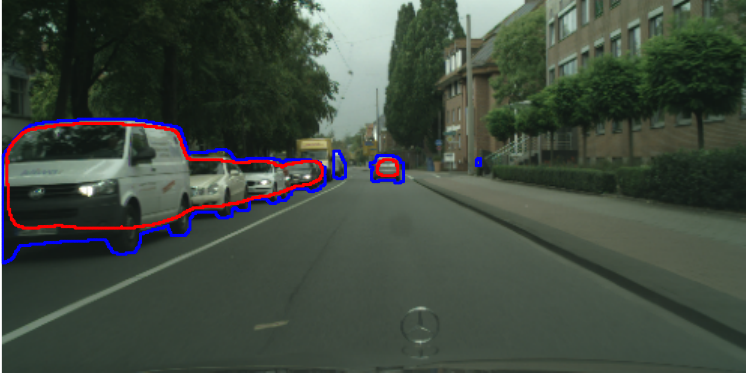}
        \caption{CLE}
        \vspace{-.1in}
    \end{subfigure}
    \vskip\baselineskip
    \begin{subfigure}{\textwidth}
        \centering
        \includegraphics[width=0.23\linewidth]{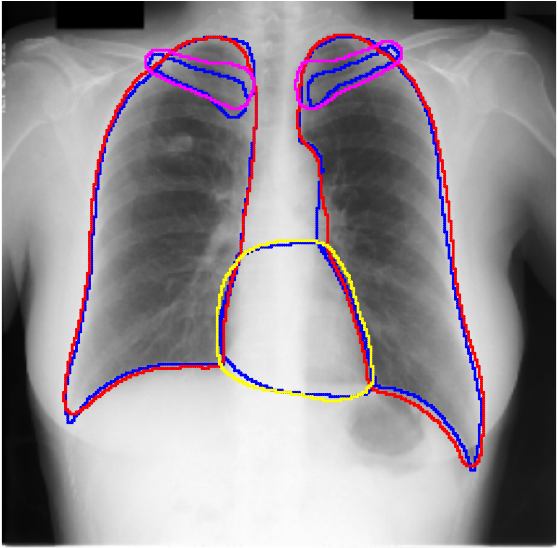}%
        \hfill
        \includegraphics[width=0.23\linewidth]{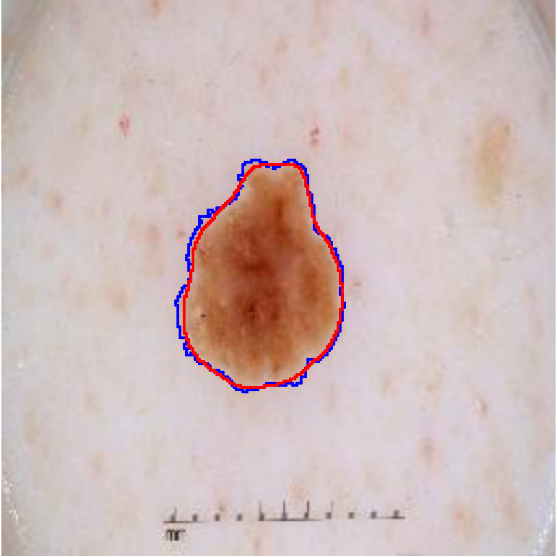}
        \hfill
        \includegraphics[width=0.46\linewidth]{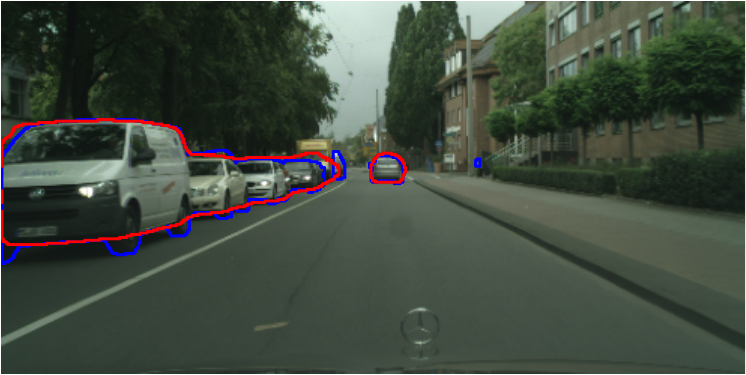}
        \caption{Ours}
        \vspace{-.1in}
    \end{subfigure}
    \caption{Qualitative results of different model predictions -- part two. Blue lines are true boundaries. All other colors are corresponding prediction boundaries.}
    \label{fig:Qualitative results-2}
\end{figure}

\subsection{Extented Notation}
\label{sec:3dNotation}
Although our notation is for 2D images, our method naturally generalizes to 3D. For 3D input $X\in\mathbb{R}^{H\times W\times D\times C}$, to define the spatial correlation in 3D input, we just need to redefine the neighbor $N_s$ as the six-neighbor, i.e. the four-neighbor in the same slice and up and down pixel in adjacent slices. In general, our spatial correlation can be defined for images of arbitrary dimension.

\end{document}